\definecolor{blue}{rgb}{0.032812499999999994, 0.3390625, 0.45390624999999996}
\crefname{theorem}{Theorem}{Theorems}
\crefname{lemma}{Lemma}{Lemmas}
\crefname{corollary}{Corollary}{Corollaries}
\crefname{definition}{Definition}{Definitions}
\crefname{claim}{Claim}{Claims}
\crefname{remark}{Remark}{Remarks}
\newcommand{\OPT}{\mathit{OPT}}
\newcommand{\ALG}{\mathit{ALG}}
\newcommand{\E}{\mathcal{E}}
\newcommand{\X}{\mathcal{X}}
\newcommand{\A}{\mathcal{A}}
\newcommand{\D}{\mathcal{D}}
\newcommand{\G}{\mathcal{G}}
\newcommand{\Outer}{\operatorname{\textsc{outer}}}
\newcommand{\problem}[4]{
  \begin{mdframed}%
  \begin{tabularx}{\textwidth}{l X}%
  \multicolumn{2}{l}{#1 #2} \\%
  \textbf{Input:} & #3\\%
  \textbf{Question:} & #4%
  \end{tabularx}%
  \end{mdframed}%
}
\newcommand{\sat}{\textsc{3-sat}}
\newcommand{\psat}{\textsc{Planar-3-sat}}
\newcommand{\tracking}{\textsc{Tracking}}
\newcommand{\ptracking}{\textsc{Planar-Tracking}}
\title{Tracking Paths in Planar Graphs}
\author{David Eppstein}
  {Department of Computer Science, University of California Irvine, US}
  {eppstein@uci.edu}{}{}
\author{Michael T. Goodrich}
  {Department of Computer Science, University of California Irvine, US}
  {goodrich@uci.edu}
  {https://orcid.org/0000-0002-8943-191X}{}
\author{James A. Liu}
  {Department of Computer Science, University of California Irvine, US}
  {jamesal1@uci.edu}{}{}
\author{Pedro Matias}
  {Department of Computer Science, University of California Irvine, US}
  {pmatias@uci.edu}
  {https://orcid.org/0000-0003-0664-9145}{}
\authorrunning{D.\,Eppstein, M.\,T. Goodrich, J.\,A. Liu and P.\,Matias}
\keywords{Approximation Algorithm, Courcelle's Theorem, Clique-Width,
  Planar, 3-SAT, Graph Algorithms, NP-Hardness}
\begin{document}

\maketitle

\begin{abstract}\label{abs}
We consider the NP-complete problem of tracking paths in a graph, first introduced by Banik et. al. \cite{banik2017tracking}. Given an undirected graph with a source $s$ and a destination $t$, find the smallest subset of vertices whose intersection with any $s-t$ path results in a unique sequence. In this paper, we show that this problem remains NP-complete when the graph is planar and we give a 4-approximation algorithm in this setting. We also show, via Courcelle's theorem, that it can be solved in linear time for graphs of bounded-clique width, when its clique decomposition is given in advance.
\end{abstract}

\section{Introduction}\label{sec:intro}

Motivated by applications in surveillance and monitoring, Banik et. al. \cite{banik2017tracking,banik2018polynomial} introduced the problem of tracking paths in a graph. In essence, the goal is to uniquely determine the path traversed by a moving subject or object, based on a sequence of vertices sampled from that path. Examples of surveillance applications include the following: (i) vehicle tracking in road networks; (ii) habitat monitoring; (iii) intruder tracking and securing large infrastructures; (iv) tracing back of illicit Internet activities by tracking data packets. Another application would be to determine the nodes in a network that have been compromised by a spreading infection, given an incomplete transmission history of a pathogen. This information can be helpful in identifying and attenuating the negative impact caused by biological and non-biological infectious agents, such as:

\begin{itemize}
  \item Highly-contagious diseases (e.g. Severe Acute Respiratory Syndrome) which could lead to epidemics \cite{goh2006epidemiology}.
  \item Fake news and hate speech being disseminated in social networks, as well as violations of privacy (e.g. sharing without permission highly sensitive content owned by a user, such as intimate pictures).
  \item Computer viruses, which spread throughout servers scattered across the Internet.
\end{itemize}

Some of these applications have been studied empirically or using heuristics in \cite{bhatti2009survey,gupta2003tracking,peng2007survey,snoeren2001hash} or, in the case of infections, \cite{moore2000epidemics,newman2002spread,shah2011rumors,bailey1975mathematical}. To the best of our knowledge, Banik et. al. were the first to approach the problem of tracking paths from a theory perspective. In this work, we extend some of their work and give new algorithms. Our main results apply to graphs that can: be embedded on the plane (of interest to surveillance in road networks and similar infrastructures) or that have bounded clique-width (mainly of theoretical interest).

\subparagraph*{Preliminaries.}

In the tracking paths problem, we are given an undirected graph $G=(V,E)$ with no self loops or parallel edges and a source $s \in V$ and a destination $t \in V$. The goal is to place trackers on a subset of the vertices in a way that enables us to reconstruct exactly the path traversed from $s$ to $t$. Let $T \subseteq V$ be a set of vertices (where we wish to place trackers) and let $\mathcal{S}_P^T$ be the \emph{sequence} of vertices in $T$ visited during the traversal of a path\footnote{Some authors use the terms ``path'' and ``walk'' interchangeably, where vertices may be repeated, but in this paper, paths are required to have distinct vertices.} $P$. Let $u-v$ denote a path from $u$ to $v$. We say that $T$ is a \emph{tracking set} if every $s-t$ path yields a unique sequence of observed vertices in $T$, that is, $\mathcal{S}^T_{P_1} \neq \mathcal{S}^T_{P_2}$ for all distinct $s-t$ paths $P_1$ and $P_2$. We consider the following problem.

\problem{\ptracking}{$(G,s,t)$}
  {Undirected \emph{planar} graph $G=(V,E)$ and two vertices $s\in V$ and $t\in V$.}
  {What is the smallest tracking set for $G$?}

We denote by \tracking{} the problem of tracking paths when the input graph is not restricted to be planar. Due to space constraints, we defer proofs of Lemmas/Theorems marked with $\star$ to the appendix.

\subparagraph*{Related work.}

Banik et. al. first introduced \tracking{} in \cite{banik2018polynomial}, where it is shown to be  NP-hard by reducing from Vertex Cover, which seems unlikely to work in the planar case. Although not immediately obvious, they also show that \tracking{} is in NP, by observing that every tracking set is also a \emph{feedback vertex set}, i.e. a set of vertices whose removal yields an acyclic graph. Finally, they present a fixed-parameter tractable (FPT) algorithm (parameterized by the solution size) for the decision version of the problem, where they obtain a kernel of size $O(k^7)$ edges.

The concept of tracking set, however, first appeared in Banik et. al. \cite{banik2017tracking}, where they considered a variant of \tracking{} that only concerns shortest $s-t$ paths, essentially modeling the input as a directed acyclic graph (DAG). Using a similar reduction from Vertex Cover, they show that this variant cannot be approximated within a factor of $1.3606$, unless P=NP. They also give a 2-approximation for the planar version of tracking shortest paths, but they omit any hardness results for this variant. 

More recently, Bil\`{o} et. al. \cite{DBLP:conf/sirocco/BiloG0P19} generalized the version of the problem concerning shortest $s-t$ paths, into the case of multiple source-destination pairs, for which they claim the first $O(\sqrt{n \log n})$-approximation algorithm for general graphs. They also study a version of this multiple source-destination pairs problem in which the set of trackers itself (excluding the order in which they are visited) is enough to distinguish between $s-t$ shortest paths\footnote{Notice that when tracking shortest paths only and using a single source-destination pair, these two versions of the problem are the same.}. In this setting, they claim a $O(\sqrt{n})$-approximation algorithm and they show that it is NP-hard even for cubic planar graphs. The hardness construction intrinsically relies on the multiplicity of source-destination pairs and, therefore, cannot be adapted to the problem studied in this paper. They also give an FPT algorithm (w.r.t to the maximum number of vertices at the same distance from the source) for the problem concerning a single source-destination pair that was introduced in \cite{banik2017tracking}.

In \cite{DBLP:conf/caldam/BanikC18}, Banik and Choudhary generalize \tracking{} into a problem on set systems\footnote{Also called hypergraphs.}, which are characterized by a universe (e.g. vertex set) and a family of subsets of the universe (e.g. $s-t$ paths). They show that this generalized version of the problem is fixed-parameter tractable, by establishing a correspondence with the well known Test Cover problem.

\subparagraph*{Our results.}

In this paper, we give a 4-approximation for \ptracking{} (\cref{sec:approximation}) and prove that it is NP-complete (\cref{sec:hardness}). In addition, we show that \tracking{} can be solved in cubic time for graphs of bounded clique-width and linear time if the clique decomposition of bounded width is given in advance (\cref{sec:courcelle}).

\section{Definitions}

\begin{definition}[Entry-exit pair] \label{def:entry_exit}
Let $(G,s,t)$ be an instance of \tracking{}. An \emph{entry-exit} pair is, with respect to some simple cycle $C$ in $G=(V,E)$, an ordered pair $(s',t')$ of vertices in $C$ that satisfy the following conditions:
\begin{enumerate}
  \item There exists a path $s-s'$ from $s$ to the \emph{entry} vertex $s'$
  \item There exists a path $t'-t$ from the \emph{exit} vertex $t'$ to $t$
  \item Paths $s-s'$ and $t'-t$ are vertex-disjoint
  \item Path $s-s'$ (resp. $t'-t$) and $C$ share exactly one vertex: $s'$ (resp. $t'$).
\end{enumerate}
\end{definition}

Essentially, an entry-exit pair $(s',t')$ with respect to a cycle $C$ (see \cref{fig:entry_exit}) represents two alternative $s-t$ paths and, thus, requires tracking at least one of them. We say that $(s',t')$ is \emph{tracked with respect to $C$} if and only if $C\setminus \{s',t'\}$ contains a tracker. In addition, $C$ is \emph{tracked} if and only if there is no entry-exit pair with respect to $C$ that is untracked. If a cycle contains either (i) 3 trackers or (ii) $s$ or $t$ and 1 tracker in a non-entry/non-exit vertex, then it must be tracked. We say that these cycles are \emph{trivially tracked}.

\begin{figure}
\centering
\includegraphics[width=190pt]{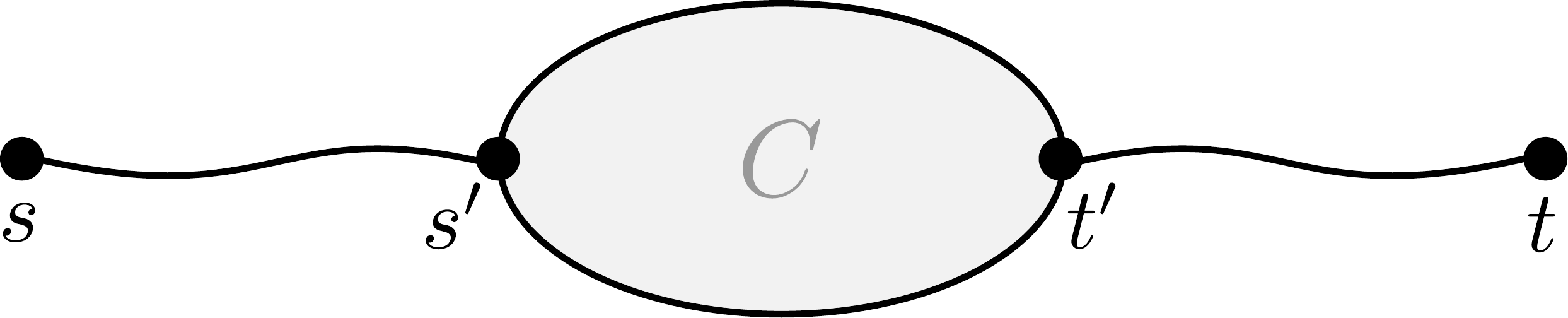}
\caption{Entry-exit pair illustration, with entry vertex $s'$ and exit vertex $t'$.}
\label{fig:entry_exit}
\end{figure}

An alternative characterization of a tracking set, first given by Banik et al \cite[Lemma~2]{banik2018polynomial}, is the following.

\begin{lemma}[\cite{banik2018polynomial}] \label{lem:tracking_set}
For a graph $G=(V,E)$, a subset $T \subseteq V$ is a tracking set if and only if every simple cycle $C$ in $G$ is tracked with respect to $T$.
\end{lemma}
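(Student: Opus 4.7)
The plan is to prove both directions of the biconditional, with the converse (every cycle tracked $\Rightarrow$ $T$ is a tracking set) being the more delicate one.

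For the forward direction I would prove the contrapositive. Suppose some simple cycle $C$ has an \emph{untracked} entry-exit pair $(s', t')$. The cycle splits at $s', t'$ into two internally vertex-disjoint arcs $A_1, A_2$ from $s'$ to $t'$. Using the paths $Q_1$ (an $s-s'$ path) and $Q_2$ (a $t'-t$ path) from \cref{def:entry_exit}, I form two candidate $s-t$ paths $P_i := Q_1 \cdot A_i \cdot Q_2$ for $i \in \{1,2\}$. Conditions~3 and~4 of the entry-exit definition ensure each $P_i$ is simple, and $P_1 \neq P_2$ because $A_1 \neq A_2$. Since $C \setminus \{s', t'\}$ contains no tracker, the internal vertices of the arcs contribute nothing to the tracker sequences; all tracker contributions come from $Q_1$, $Q_2$, and possibly $s', t'$ themselves, all shared by the two paths. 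Hence $\mathcal{S}_{P_1}^T = \mathcal{S}_{P_2}^T$, contradicting that $T$ is a tracking set.

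For the converse I would argue by contradiction. Assuming every simple cycle is tracked but $T$ is not a tracking set, I pick distinct $s-t$ paths $P_1, P_2$ with $\mathcal{S}_{P_1}^T = \mathcal{S}_{P_2}^T$ minimizing $|V(P_1) \cup V(P_2)|$. Let $s'$ be the last vertex of the longest common prefix of $P_1$ and $P_2$ (it exists since the paths are distinct), and let $t'$ be the first vertex on $P_1$ strictly after $s'$ that also lies on $P_2$ strictly after $s'$ (the sink $t$ itself qualifies). The minimality of $t'$'s position on $P_1$ forces $P_1[s' \to t']$ and $P_2[s' \to t']$ to be internally vertex-disjoint---any common internal vertex would be an earlier re-meeting vertex on $P_1$---so their union is a simple cycle $C$.

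Next I would verify that $(s', t')$ is an entry-exit pair with respect to $C$, taking the common prefix as the $s-s'$ path and $P_1[t' \to t]$ as the $t'-t$ path. All sharing conditions follow from the simplicity of $P_1, P_2$ except the possibility that some vertex $x \neq t'$ lies in both $P_1[t' \to t]$ and $P_2[s' \to t']$. I would rule this out by a swap argument: if such $x$ exists, splicing $P_1$ and $P_2$ at $x$ (iterating the cleanup if the splice reintroduces coincidences) yields a new pair of distinct $s-t$ paths with matching tracker sequences but strictly smaller $|V(P_1) \cup V(P_2)|$, contradicting minimality. Once $(s', t')$ is an entry-exit pair, the tracked-cycle hypothesis supplies a tracker $t^* \in C \setminus \{s', t'\}$, which (without loss of generality) lies internal to $P_1[s' \to t']$. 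The same three cases show $t^* \notin V(P_2)$: not in the common prefix (by simplicity of $P_1$), not in $P_2[s' \to t']$ (by internal disjointness), and not in $P_2[t' \to t]$ (by the swap argument). Thus $t^* \in \mathcal{S}_{P_1}^T \setminus \mathcal{S}_{P_2}^T$, contradicting equality of the tracker sequences.

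The main obstacle will be the swap argument. I need to specify how to splice $P_1$ and $P_2$ at the offending vertex $x$, show that the result (after cleanup, if needed) is still a simple $s-t$ path, verify that the tracker sequence is preserved under the splice using the arc disjointness and common-prefix structure, and argue that $|V(P_1) \cup V(P_2)|$ strictly decreases. The possibility that a naive splice introduces new coincident vertices is the most delicate piece, and the minimality assumption is what ultimately forces termination.
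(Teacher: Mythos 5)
The paper does not prove this lemma; it is cited from Banik et al.\ \cite{banik2018polynomial} and used as a black box, so there is no in-paper proof to compare against. Evaluating your argument on its own terms: the forward direction (contrapositive of ``$T$ a tracking set $\Rightarrow$ every cycle tracked'') is correct and essentially complete---Conditions~3 and~4 of \cref{def:entry_exit} do make $Q_1\cdot A_i\cdot Q_2$ a simple path, and with no tracker on $C\setminus\{s',t'\}$ the two resulting $s$--$t$ paths are indistinguishable.

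The converse has a genuine gap, and it is exactly where you flagged one. After defining $C=P_1[s'\!\to t']\cup P_2[s'\!\to t']$, you must rule out both $P_1(t'\!\to t]\cap P_2(s'\!\to t')\neq\emptyset$ and the symmetric set $P_1(s'\!\to t')\cap P_2(t'\!\to t]\neq\emptyset$ (you only name the first, but the three-case argument that $t^*\notin V(P_2)$ also needs the second). The ``swap argument'' does not yet do this. Of the two natural splices at a bad vertex $x\in P_1(t'\!\to t]\cap P_2(s'\!\to t')$, the splice $P_1[s\!\to x]\cdot P_2[x\!\to t]$ is never simple (it visits $t'$ twice), so you are forced into $P_2[s\!\to x]\cdot P_1[x\!\to t]$; one can make that simple by choosing $x$ extremal on one of the two paths. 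But then two things remain unproven. First, the new path paired with $P_1$ (or $P_2$) is not obviously indistinguishable: the hypothesis $\mathcal{S}^T_{P_1}=\mathcal{S}^T_{P_2}$ does \emph{not} imply that the trackers on $P_1(s'\!\to x)$ and on $P_2(s'\!\to x)$ agree as ordered sequences, since the split at $x$ need not align with a common tracker; so the spliced path may have a different tracker sequence from both originals. Second, $|V(P')\cup V(P_i)|<|V(P_1)\cup V(P_2)|$ requires a vertex of $P_2(x\!\to t]$ outside $V(P_1)$, which you have not established. ``Iterating the cleanup'' is a hand-wave over both issues: you need a monotone quantity that actually decreases and a reason the tracker sequence is preserved at each step. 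Until these are pinned down, the converse is not proved.
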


\section{Approximation algorithm}\label{sec:approximation}

\begin{theorem} \label{thm:4approx}
There exists a 4-approximation algorithm for \ptracking{}.
\end{theorem}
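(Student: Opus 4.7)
The plan is to bootstrap from a feedback vertex set (FVS) approximation, using the fact (noted in Related Work) that every tracking set is an FVS, which gives the lower bound $\OPT_{\text{FVS}}(G)\le \OPT(G)$ where $\OPT(G)$ is the minimum tracking set size. First, I would invoke a known polynomial-time $2$-approximation for FVS to compute a set $F$ with $|F|\le 2\,\OPT(G)$. This is the ``easy'' factor of $2$.

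Second, I would augment $F$ into a tracking set $T\supseteq F$ with $|T|\le 2|F|$, giving the desired $|T|\le 4\,\OPT(G)$. The augmentation adds, for each $v\in F$, at most one carefully chosen neighbor of $v$ in $G\setminus F$. Correctness is verified cycle-by-cycle via \cref{lem:tracking_set}: every simple cycle $C$ must be tracked by $T$. Since $F$ is already an FVS, $C\cap F\ne\emptyset$. If $|C\cap F|\ge 3$, then every entry-exit pair $(s',t')$ excludes only two vertices, so at least one vertex of $C\cap F$ survives in $C\setminus\{s',t'\}$ and $C$ is trivially tracked. The interesting case is $|C\cap F|\in\{1,2\}$; here I need the augmented neighbors to supply a tracker outside $\{s',t'\}$ for every entry-exit pair of $C$. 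The structural lever is that $G\setminus F$ is a forest, so a cycle $C$ through an $F$-vertex $v$ enters and leaves $v$ via exactly two of its neighbors, at least one of which lies on a tree-path between consecutive $F$-vertices of $C$.

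The main obstacle is designing the per-vertex augmentation rule so that a single neighbor choice at each $v\in F$ simultaneously handles every problematic cycle through $v$. This is where planarity is essential: the planar embedding imposes a cyclic order on the neighbors of $v$ and restricts how cycles through $v$ can interleave its tree-neighbors, which should allow a canonical choice of ``augmented neighbor'' for $v$ — for instance by selecting a neighbor that, together with $v$, lies on a common face. Verifying that this local rule globally produces a tracking set, by matching the entry-exit structure (which is a global property involving paths from $s$ and to $t$) against the local planar structure around each FVS vertex, is the principal technical burden. Once the augmentation is shown to suffice, composing the two steps yields $|T|\le 2|F|\le 4\,\OPT(G)$, proving the approximation ratio.
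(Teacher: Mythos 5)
Your plan diverges fundamentally from the paper's (which bounds $\OPT$ from below by the number of faces and bounds $\ALG$ from above by degree-counting after local reductions), and unfortunately the divergence introduces a fatal gap. The augmentation step --- turning an FVS $F$ into a tracking set $T$ with $|T|\le 2|F|$ --- is not just technically delicate, it is impossible in general, even for planar graphs. The ratio between a minimum tracking set and a minimum FVS is unbounded.

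Concretely, take two hub vertices $u,v$, join them by $k$ internally disjoint paths of length $2$ (one internal vertex each), and attach $s$ to $u$ and $t$ to $v$. This graph is planar. Its minimum FVS has size $1$ (delete $u$), so any $2$-approximate FVS $F$ has $|F|\le 2$. But every pair of the $k$ parallel paths forms a simple cycle with entry $u$ and exit $v$, and by \cref{lem:tracking_set} each such cycle needs a tracker outside $\{u,v\}$; hence all but one of the $k$ internal vertices must be tracked, giving $\OPT = k-1$. No augmentation of a two-element set $F$ by one neighbor per element can produce a tracking set here once $k>5$, so the inequality $|T|\le 2|F|$ you need is unattainable. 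The underlying reason is that the FVS lower bound $\OPT_{\mathrm{FVS}}\le\OPT$ is far too weak: the paper instead proves $\OPT\ge(|F|-1)/2$ where $F$ is the face set of the (reduced, embedded) planar graph, and in the example above the number of faces is about $k+1$, recovering the correct $\Theta(k)$ lower bound. If you want to rescue an FVS-flavored argument, you would need a lower bound that scales with the number of ``parallel'' cycles through a single hub, which is exactly what the face-count provides and what the FVS size does not.
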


The overall idea for the approximation algorithm builds on the following two insights:
\begin{enumerate}
  \item The cardinality of the optimal solution cannot be much smaller than the number of faces in the graph.
  \item The average number of trackers per face does not need to be very large.
\end{enumerate}

The first idea gives us a lower bound on $\OPT$, the cardinality of an optimal solution. The second gives us an upper bound on $\ALG$, the cardinality of our approximation algorithm.

\subsection{Lower bound on \texorpdfstring{$\OPT$}{OPT}}

We consider the following reduction, which takes care of disconnected components, or components that are ``attached'' to the graph by a cut vertex that is not in an $s-t$ path. We say that a reduction is \emph{safe}, if it does not eliminate any untracked cycles.

\begin{enumerate}[{Reduction} 1.]
\item While there exists an edge or vertex that does not participate in any $s-t$ path, remove it from the graph. \label{reduction:1}
\end{enumerate}

\begin{restatable}[\cite{banik2018polynomial}]{lemma}{lemrone}\label{lem:reduction1}
$\star$ \hyperref[reduction:1]{Reduction~1} is safe and can be done in polynomial time.
\end{restatable}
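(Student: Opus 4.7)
The plan is to prove the two claims separately: safety via a structural argument about entry-exit pairs, and polynomial time via standard reachability/flow machinery.

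For safety, I would argue the contrapositive: if a simple cycle $C$ carries an entry-exit pair $(s',t')$, then every vertex and every edge of $C$ already lies on some simple $s-t$ path, so no such cycle can be destroyed by \hyperref[reduction:1]{Reduction~1}. Given an edge $e$ on $C$, let $A$ be the arc of $C$ from $s'$ to $t'$ that contains $e$, and form the concatenation of the $s-s'$ path $P_1$ from \cref{def:entry_exit}, the arc $A$, and the $t'-t$ path $P_2$. The crux is checking that this walk is a \emph{simple} $s-t$ path: by condition~3 of \cref{def:entry_exit}, $P_1$ and $P_2$ are vertex-disjoint; by condition~4, each of $P_1$ and $P_2$ meets $C$ only at its designated endpoint ($s'$ for $P_1$, $t'$ for $P_2$); so the three segments share only the intended junction vertices $s'$ and $t'$. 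For a vertex $v$ on $C$, the same construction works once $A$ is taken to be the arc through $v$, with the boundary cases $v\in\{s',t'\}$ being immediate. A cycle with no entry-exit pair imposes no tracking constraint (its $s-t$ paths are not alternatives that need to be distinguished), so its removal is harmless; hence \hyperref[reduction:1]{Reduction~1} is safe.

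For polynomial time, I would invoke the classical fact that deciding whether a given edge or vertex lies on some simple $s-t$ path is polynomial. One clean route is via the block-cut tree of $G$: an edge (or non-cut vertex) lies on some simple $s-t$ path if and only if its block appears on the unique block-cut-tree path between a block of $s$ and a block of $t$, using 2-connectedness inside each block to route through any chosen element. A more direct alternative is one vertex-capacitated max-flow per candidate element. Either way, since each iteration deletes at least one edge or vertex, the overall reduction terminates in $O(n+m)$ rounds, each running in polynomial time.

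The only delicate step is the simplicity check in the safety argument, but once the four entry-exit conditions are unpacked it reduces to straightforward bookkeeping about which of the three segments can share which vertices; I do not expect any serious obstacle beyond that. The polynomial-time claim is essentially an invocation of standard connectivity machinery.
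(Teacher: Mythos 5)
The paper gives no argument of its own here: the appendix proof of \cref{lem:reduction1} is literally a pointer to Banik et al.~\cite{banik2018polynomial}, so there is no in-paper approach to compare against. Judged on its own terms, your proof is correct and is the natural argument. For safety, your key step---that if $C$ carries an entry-exit pair $(s',t')$, then by conditions~3 and~4 of \cref{def:entry_exit} the concatenation of the $s$--$s'$ path, an arc of $C$, and the $t'$--$t$ path is a simple $s$--$t$ path---shows that no edge or vertex of such a cycle can ever be deleted; together with \cref{lem:tracking_set} (a cycle with no entry-exit pair is vacuously tracked by every $T$, so its loss is immaterial) this gives safety. Note that the paper's \cref{lem:cycle_has_entry_exit} is essentially the converse of your construction, and your reasoning also implicitly relies on the fact that deleting an element not on any $s$--$t$ path cannot destroy any $s$--$t$ path, so the set of elements ``on some $s$--$t$ path'' is invariant across iterations; it would be worth stating this explicitly. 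For polynomial time, the block-cut-tree route is clean and correct (the inner step is the standard fan-lemma fact that in a $2$-connected block one can route an $a$--$b$ path through any prescribed edge or vertex). The max-flow alternative as phrased is a little underspecified: for an edge you should first subdivide it, and you must also cap $s$ and $t$ at one unit, or a flow of value two could send both paths to the same terminal; with those two fixes it works. None of this affects the correctness of the overall argument.
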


\begin{lemma}\label{lem:cycle_has_entry_exit}
After \hyperref[reduction:1]{Reduction~1}, every simple cycle in the graph contains at least one entry-exit pair. This holds for non-planar graphs as well.
\end{lemma}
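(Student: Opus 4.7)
The plan is to fix an arbitrary simple cycle $C$ and \emph{construct} an entry-exit pair by extracting it from an $s$-$t$ path that traverses an edge of $C$.

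First, I would pick any edge $e=(u,v)$ of $C$. By \hyperref[reduction:1]{Reduction~1}, $e$ participates in some $s$-$t$ path $P$; in particular, $P$ meets $C$ in at least the two vertices $u$ and $v$. Let $V(P)\cap V(C) = \{v_1,v_2,\dots,v_k\}$ enumerated in the order they are visited by $P$ when traversed from $s$ to $t$; note $k\ge 2$. Set $s'\coloneqq v_1$ and $t'\coloneqq v_k$, and let $P_1$ be the prefix of $P$ from $s$ to $s'$ and $P_2$ the suffix of $P$ from $t'$ to $t$.

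Next, I would verify the four conditions of \cref{def:entry_exit}. Conditions~1 and~2 are immediate since $P_1$ and $P_2$ are paths in $G$. Condition~3 (vertex-disjointness of $P_1$ and $P_2$) follows because $P$ is a simple path and $s'\ne t'$ (as $k\ge 2$), so the prefix ending at $v_1$ and the suffix starting at $v_k$ share no vertex. Condition~4 follows from the minimality/maximality in the choice of indices: by the definition of $v_1$ as the first vertex of $V(P)\cap V(C)$ along $P$, the prefix $P_1$ meets $C$ only at $s'$; symmetrically for $P_2$ and $t'$.

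I do not expect a real obstacle here; the proof is essentially a bookkeeping argument. The one subtle point to be careful about is the edge case where $s$ or $t$ itself lies on $C$: in that case $P_1$ (or $P_2$) may degenerate to a single vertex, but \cref{def:entry_exit} still permits this since a single-vertex path trivially shares exactly one vertex with $C$. The same argument works without any planarity assumption, justifying the last sentence of the lemma statement.
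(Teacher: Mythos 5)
Your proof is correct and takes essentially the same approach as the paper's: both identify an $s$-$t$ path sharing an edge with $C$ (guaranteed by Reduction~1) and take the first and last cycle vertices along that path as the entry-exit pair. You simply spell out the verification of the four conditions of \cref{def:entry_exit} and the degenerate case $s,t\in C$, which the paper leaves implicit.
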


\begin{proof}
Let $C$ by some simple cycle in the graph. After \hyperref[reduction:1]{Reduction~1}, there must exist an $s-t$ path that shares an edge with $C$. The first and last vertices on this path that belong to $C$ correspond to an entry-exit pair.
\end{proof}

\begin{lemma}\label{lem:OPT_bound}
In an embedded undirected planar graph $G$ that results from \hyperref[reduction:1]{Reduction~1}, $\OPT \ge (|F|-1)/2$, where $F$ is the set of faces of $G$.
\end{lemma}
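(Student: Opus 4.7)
The plan is to prove the equivalent bound $|F| \leq 2\,\OPT + 1$ by combining Euler's formula with the fact, already noted in the related work, that any tracking set is also a feedback vertex set.

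Let $T$ be an optimal tracking set and let $c$ be the number of connected components of the forest $G - T$. Since $G$ is connected after \hyperref[reduction:1]{Reduction~1}, Euler's formula gives $|E| = |V| + |F| - 2$; since $G - T$ is a forest, $|E(G - T)| = |V| - |T| - c$. Subtracting, the number of edges of $G$ incident to $T$ is exactly
\[
  m_T \;=\; |F| + |T| + c - 2.
\]
So a bound of the form $m_T \leq 3|T| + c - 1$ would immediately rearrange to $|F| \leq 2|T| + 1$ as desired.

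To obtain such a bound from planarity, I would contract each tree of $G - T$ to a single super-vertex, producing a planar multi-graph $G'$ on $|T| + c$ vertices with $m_T$ edges. The graph $G'$ has no loops (since edges inside a tree are contracted away) and inherits a planar embedding from $G$. If $G'$ happened to be simple, the standard planar bound $m_T \leq 3(|T| + c) - 6$ would give $|F| \leq 2|T| + 2c - 4$, which is already sufficient when $c \leq 2$.

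The main obstacle is parallel edges in $G'$. Each pair of parallel edges between a tracker $u$ and a super-vertex corresponds to a cycle in $G$ that contains only one tracker, namely $u$; by \cref{lem:cycle_has_entry_exit} this cycle admits an entry-exit pair, and for it to be tracked, $u$ cannot appear in any entry-exit pair on the cycle. I expect the cleanest way to handle these is to charge each bigon face of $G'$ (which corresponds to such a short cycle) to its unique tracker and to account for them separately from the ``real'' faces of $G'$, using that bigons have boundary length exactly $2$ while the remaining faces have length $\geq 3$. Combined with the tracking-set constraint on each charged tracker, this should tighten the Euler-plus-planarity calculation to $m_T \leq 3|T| + c - 1$ in all cases, including $c \geq 3$, and thereby close the gap to the bound $|F| \leq 2|T| + 1$.
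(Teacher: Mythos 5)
Your approach is genuinely different from the paper's and worth comparing. The paper proves $|F|-1 \le 2\,\OPT$ constructively: it rebuilds the embedding face by face, starting from an $s$--$t$ path, maintaining a pool of active and inactive trackers from a fixed optimum $T^*$, and showing that each newly added face either forces a fresh tracker from $T^*$ into the pool or deactivates an active one, so each tracker is charged by at most two faces. You instead aim for the same amortized bound globally, via Euler's formula and the feedback-vertex-set property, working in the contracted multigraph $G'$. Your algebra up to $m_T = |F| + |T| + c - 2$ and the reformulation of the target as $m_T \le 3|T| + c - 1$ are both correct, and this is an appealingly clean reframing.

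The final step, however, is a genuine gap rather than a routine one. Planarity of the loopless multigraph $G'$ on $|T|+c$ vertices with $b$ bigon faces gives only $m_T \le 3(|T|+c) + b - 6$, leaving a slack of $2c + b - 5$ to eliminate, and this cannot be repaired from the feedback-vertex-set property alone: a single high-degree vertex can bound arbitrarily many faces (a wheel hub, say), so a small FVS does not bound $|F|$, and the tracking condition must enter in a load-bearing way. Your sketch correctly identifies where it should bite (each bigon of $G'$ comes from a cycle of $G$ with a unique tracker $u$ that, by \cref{lem:cycle_has_entry_exit}, must avoid every entry-exit pair of that cycle), but it does not translate this into a counting inequality that absorbs the $2c + b - 5$ slack, and there is no obvious mechanism for it to do so once $c \ge 3$ -- Reduction~1 gives connectivity but no bound on $c$. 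The paper's face-by-face scheme sidesteps the difficulty entirely: the amortization is built into the order in which faces are attached, so no global planar edge-density bound is ever needed. If you want to pursue your route, the missing ingredient is a concrete inequality, derived from the tracking condition, that relates $c$ and $b$ to $|T|$; as written, that step is asserted rather than proved.
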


On a high level, the proof of \cref{lem:OPT_bound} is done by keeping a set of "active" trackers while reconstructing a planar embedding $\E$ of $G$: we start, as a base case, with any simple $s-t$ path in $\E$ and iteratively add faces to it until it matches $\E$. Given a fixed, optimal tracking set $T^*$, the addition of each face requires either (i) adding a new tracker from $T^*$ to the active set, or (ii) deactivating an active tracker, rendering it useless for distinguishing paths on future faces. As a consequence, each tracker charges at most two faces: the one adding the tracker and the one deactivating it. This demonstrates that $|T^*|\geq (|F|-1)/2$.

Let $\Outer(\E^\tau)$ be the set of outer-edges of our planar reconstructing embedding $\E^\tau$ at time $\tau$. At time $\tau=0$, our embedding corresponds to an $s-t$ path and, for all $0 \le \tau \le |F|-1$, we add exactly one face $C$ from $\E$ to $\E^{\tau}$, by connecting two vertices $u$ and $v$ in $\Outer(\E^{\tau})$ with a simple path $p$ (see \cref{fig:add_face} in the appendix). In doing so, we erase an $u-v$ path $p'$ in $\Outer(\E^{\tau})$, so we have that $\Outer(\E^{\tau+1}) = \Outer(\E^{\tau}) \setminus p' \cup p$. In the end, $\E^{|F|} = \E$.

By \cref{lem:cycle_has_entry_exit}, there is at least one entry-exit pair in $\E$ with respect to face $C$, so any tracking set must contain a tracker on some vertex of $C$. During the reconstruction process, we maintain a list of trackers in sets $A$ and $A'$, such that $(A\cup A')\subseteq T^*$, where $A$ contains active trackers and $A'$ contains inactive ones. A tracker in vertex $v$ is \emph{active} at time $\tau$ if and only if it meets both of the following conditions:

\begin{enumerate}[{Condition} (i)]
  \item $v \in \Outer(\E^{\tau})$ \label{condition:c1}
  \item There is no $s-v$ path in $\E^{\tau}$ that traverses vertices in $\Outer(\E^{\tau})$
  \label{condition:c2}
\end{enumerate}

Intuitively, an active tracker can be used to track future faces, although no more than one (see below). An inactive tracker, on the other hand, either cannot be used to track future faces (\hyperref[condition:c1]{Condition~(i)}), or its corresponding vertex is entry/exit for some future face (\hyperref[condition:c2]{Condition~(ii)}), in which case we require yet another tracker on that face (see \cref{fig:inactive_trackers}). \hyperref[condition:c2]{Condition~(ii)} is necessary for dealing with embeddings of $G$ where at least one of $s$ and $t$ is not in the outer face.

\begin{figure}
\centering
\begin{subfigure}[t]{0.45\textwidth}\label{subfig:inactive_tracker1}
\centering
  \includegraphics[height=90pt]{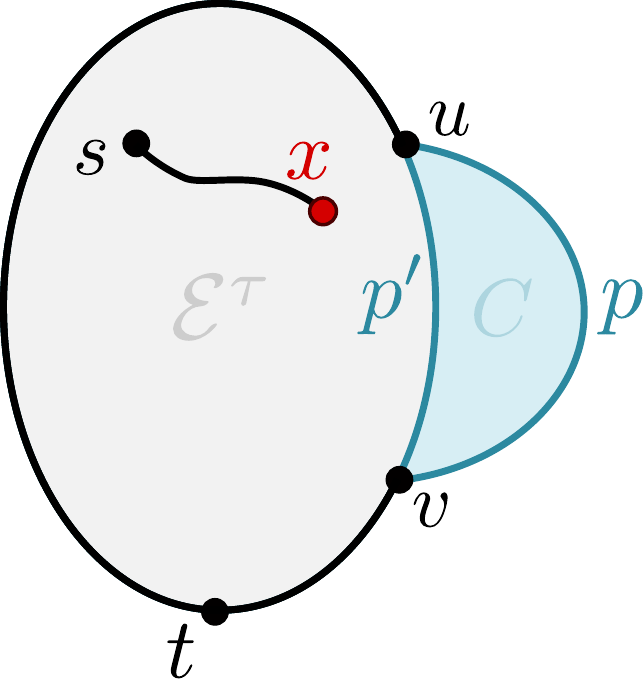}
  \caption{Tracker on $x$ (red) is inactive due to the violation of \hyperref[condition:c1]{Condition~(i)}.}
\end{subfigure}
\hspace{2em}
\begin{subfigure}[t]{0.45\textwidth}\label{subfig:inactive_tracker2}
\centering
  \includegraphics[height=90pt]{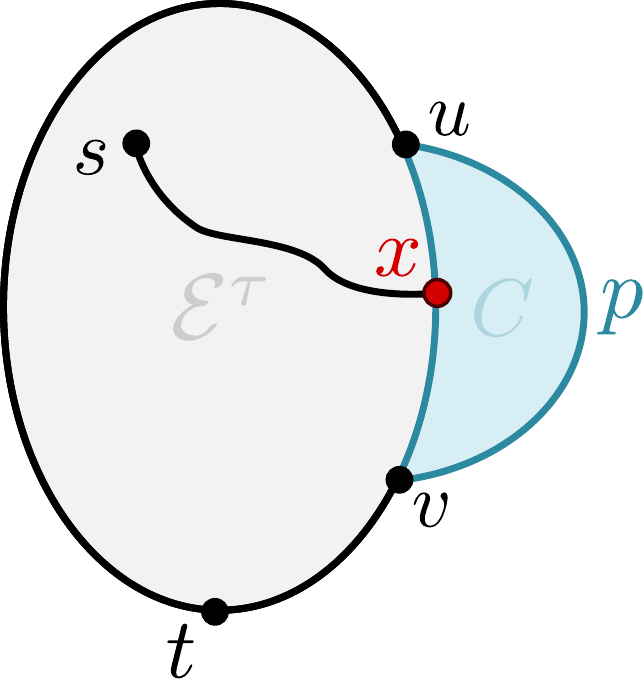}
  \caption{Tracker on $x$ (red) is inactive due to the violation of \hyperref[condition:c2]{Condition~(ii)}. Notice that $x$ is entry for exit vertices $u$ and $v$ (with respect to $C$), therefore $C$ needs another tracker.}
\end{subfigure}
\caption{Examples of inactive trackers used in the proof of \cref{lem:OPT_bound}.}
\label{fig:inactive_trackers}
\end{figure}

\begin{proof}[Proof of \cref{lem:OPT_bound}]

~

First, we argue that each time we add a face during the reconstruction process described above, we either (i) need to increase the number of active trackers (by adding it to either $A$ or $A'$), or (ii) we can get away by re-using and, therefore, deactivating an active tracker.

We assume for the rest of the argument that $t$ is on the outer face, because such a planar embedding is always possible to construct.

Let $C$ be the face added a time $\tau$ by connecting vertices $u$ and $v$ in $\Outer(\E^{\tau})$, as specified above. In addition, let $T^*$ be any optimal solution (i.e. $|T^*|= \OPT$). We consider two cases, depending on the existence of a tracker in $C$ at time $\tau$:

\begin{enumerate}[{Case} 1:]
  \item $\bm{C\cap A = \emptyset}$ at time $\tau$. \label{case:C_has_no_tracker}

  By \cref{lem:cycle_has_entry_exit}, there exists a vertex $x \in C$ such that $x\in T^*$.  We place a tracker on $x$. If $x\in \Outer (\E^ {\tau+1})$ we add $x$ to $A$, otherwise we add it to $A'$.

  \item $\bm{C\cap A \neq \emptyset}$ at time $\tau$. \label{case:C_has_tracker}

  Let $y \in C\cap A$ be a vertex of $C$ with a tracker. We again consider two cases:
  \begin{enumerate}[(i)]
    \item $\bm{y \notin \{u,v\}}.$ Then, $y \in \Outer(\E^\tau)$ but $y \notin \Outer(\E^{\tau+1})$, which amounts to moving $y$ from $A$ to $A'$.

    \item $\bm{y \in \{u,v\}}.$ If $(u,v)$ is an entry-exit pair with respect to $C$, or if the tracker in $y$ is not active, then there exists $x' \in C\setminus \{u,v\}$ such that $x' \in T^*$. Similarly to \hyperref[case:C_has_no_tracker]{Case~1}, we place a tracker on $x'$, which corresponds to adding $x'$ either to $A$ or $A'$.

    Otherwise, the tracker in $y$ is active and $(u,v)$ is not an entry-exit pair with respect to
    $C$. Let us assume without loss of generality that $y=u$. Then, the addition of $C$ deactivates the tracker in $u$ by definition of active tracker (\hyperref[condition:c2]{Condition~(ii)} is now violated), so we move $u$ from $A$ to $A'$.
  \end{enumerate}

  Every tracker in $A'$ is charged by at most two faces: one for adding an active tracker to $A$ and another for deactivating it and moving it to $A'$. Therefore, $|F| - 1 \le |A| + 2|A'|$. Since $|A| + |A'| \le |T^*|$, it follows that $|F| -1 \le 2\OPT$.
\end{enumerate}
\end{proof}

A tight example for the lower bound on $OPT$ is illustrated in \cref{subfig:OPT_tight}.

\begin{figure}
\centering
\begin{subfigure}[b]{0.4\textwidth}
\centering
\includegraphics[height=100pt]{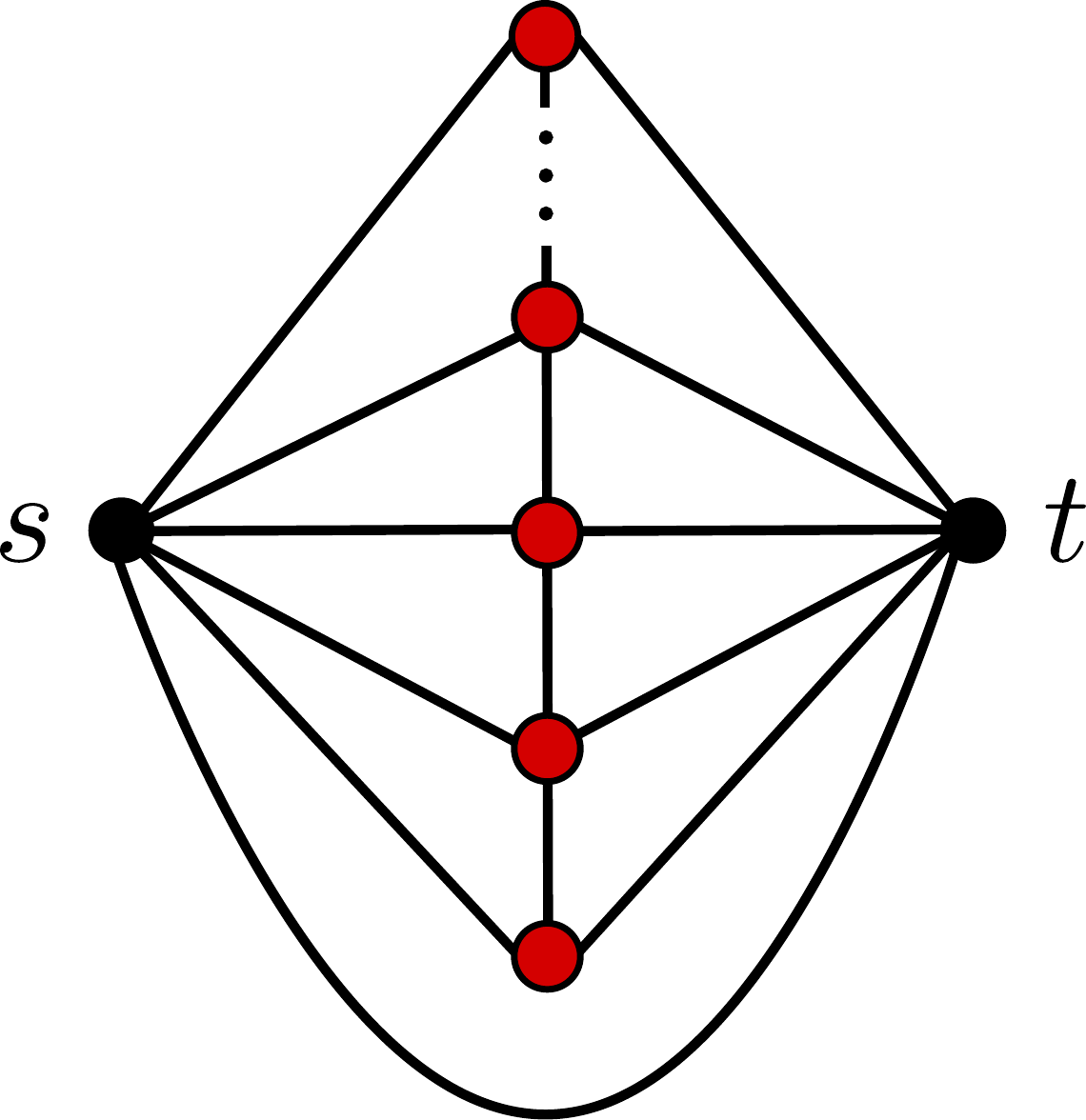}
\caption{Example of a planar graph where $OPT = |F|/2$ (in red).}\label{subfig:OPT_tight}
\end{subfigure}
\hspace{2em}
\begin{subfigure}[b]{0.5\textwidth}
\centering
\includegraphics[width=130pt]{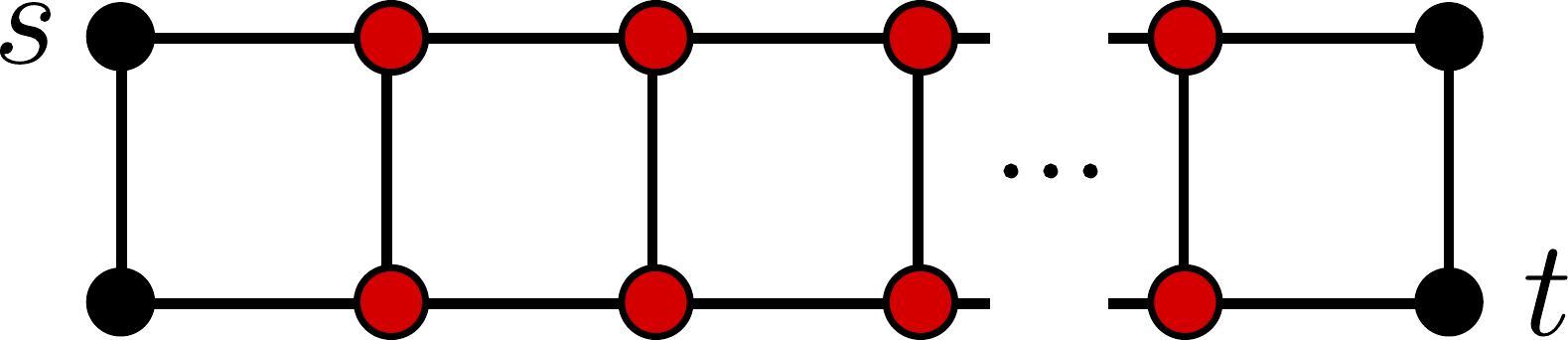}
\caption{Example of a planar graph where $ALG=2(|F|-2)$ (in red).}\label{subfig:ALG_tight}
\end{subfigure}
\caption{Tight examples for the lower bound on $OPT$ (left) and the upper bound on $ALG$ (right) in planar graphs.}
\end{figure}

\subsection{Upper bound on \texorpdfstring{$\ALG$}{ALG}}

We say that an undirected planar graph is \emph{reduced} if it cannot be further reduced by \hyperref[reduction:1]{Reduction~1} or any of the following reductions. 

\begin{enumerate}[{Reduction} 1.]
\setcounter{enumi}{1}
\item While there exist two adjacent vertices of degree 2, remove one of them (and its edges) and add an edge connecting its neighbors. \label{reduction:2}

\item While there exists vertex $v \notin \{s,t\}$ of degree 2 in a 3-cycle, place a tracker on $v$ and remove it and its edges from the graph. \label{reduction:3}

\item While there exist non-adjacent vertices $u,v \notin \{s,t\}$ of degree 2 in a 4-cycle, place a tracker on either $u$ or $v$ and remove it and its edges from the graph. \label{reduction:4}
\end{enumerate}

\begin{figure}[b]
\centering
\begin{subfigure}[c]{.45\textwidth}
  \begin{subfigure}[t]{\textwidth}
  \centering
    \includegraphics[width=178pt]{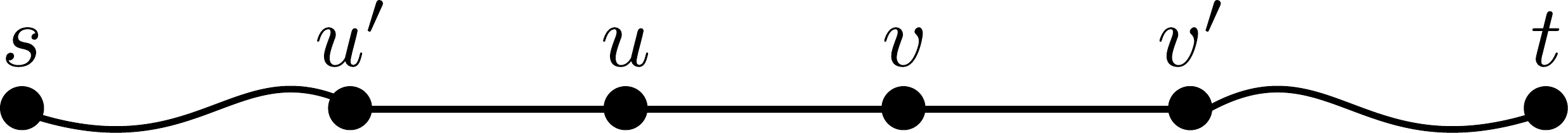}
  \caption{Illustration of \hyperref[reduction:2]{Reduction~2}, where $deg(u) = deg(v) = 2$.}
  \label{fig:reduction2}
  \end{subfigure}

  \vspace{1em}

  \begin{subfigure}[t]{\textwidth}
  \centering
    \includegraphics[width=130pt]{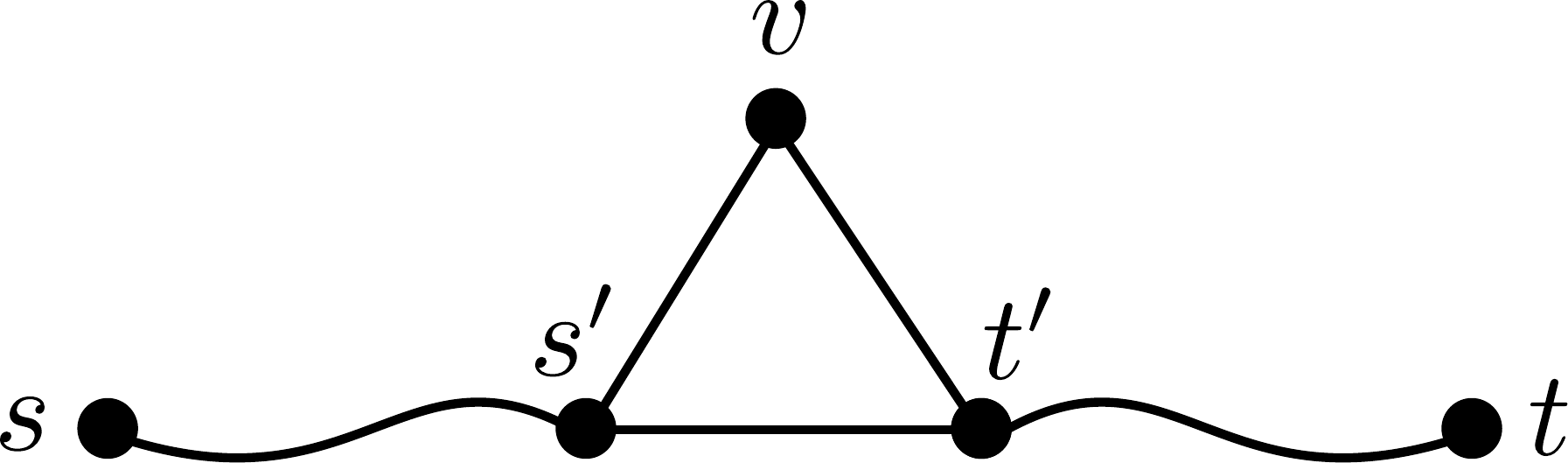}
  \caption{Illustration of \hyperref[reduction:3]{Reduction~3}, where $deg(s')\ge 3$, $deg(t')\ge 3$ and $deg(v)=2$.}
  \label{fig:reduction3}
  \end{subfigure}
\end{subfigure}
\hspace{2em}
\begin{subfigure}[c]{.45\textwidth}
\centering
\includegraphics[height=100pt]{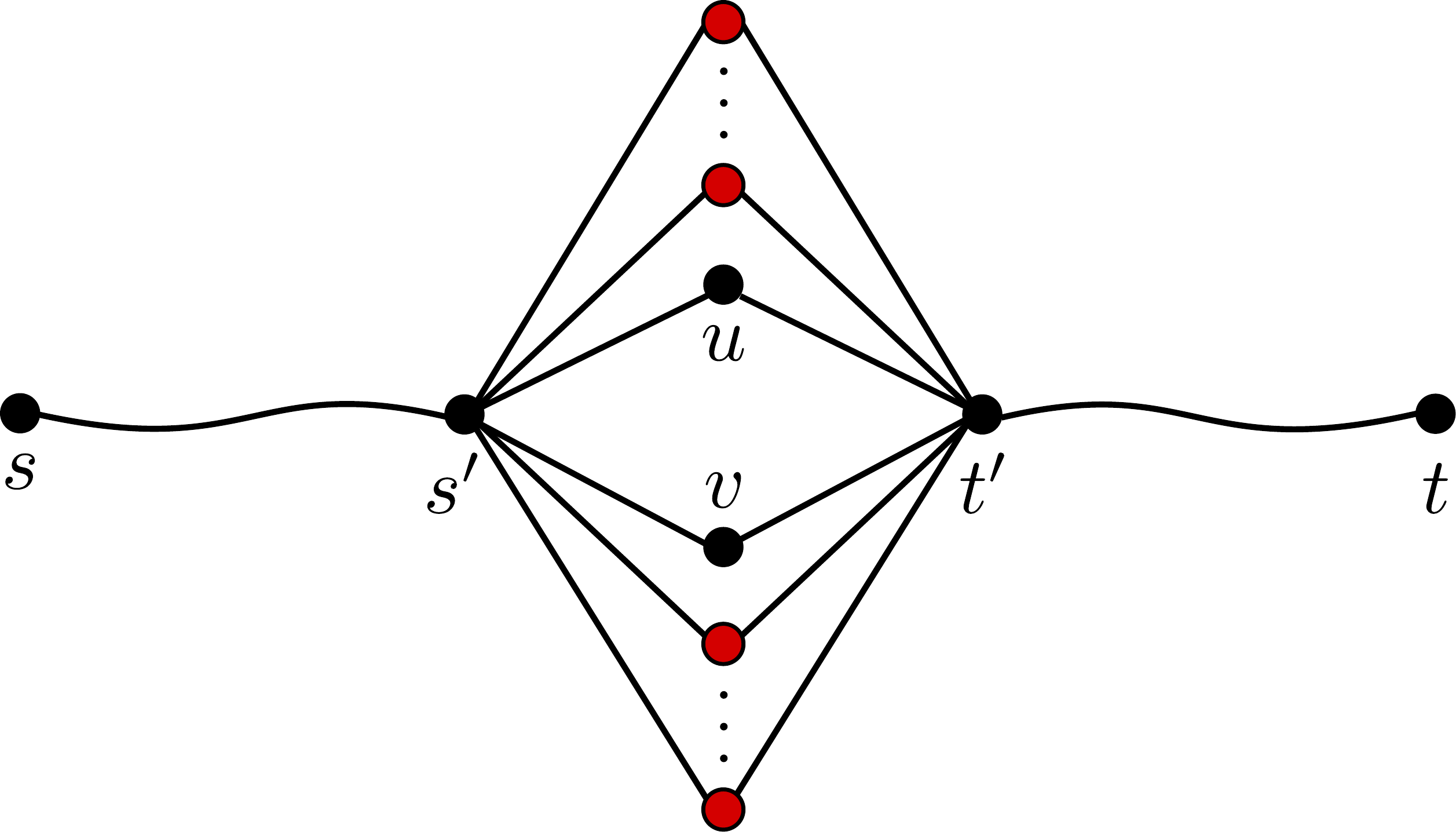}
\caption{Illustration of \hyperref[reduction:4]{Reduction~4}, where $deg(s')\ge 3$, $deg(t')\ge 3$ and $deg(u)=deg(v)=2$.}
\label{fig:reduction4}
\end{subfigure}
\caption{Illustration of \hyperref[reduction:2]{Reductions~2}, \hyperref[reduction:3]{3} and \hyperref[reduction:4]{4}.}
\end{figure}

Notice that all reduction rules are valid for general graphs, not only planar ones. In addition, \hyperref[reduction:2]{Reductions~2}, \hyperref[reduction:3]{3} and \hyperref[reduction:4]{4} can be applied interchangeably and in any order until none of them is applicable, but we will see that they need to be carried out after \hyperref[reduction:1]{Reduction~1}. Fortunately, we will not be required to re-apply \hyperref[reduction:1]{Reduction~1} after performing the remaining reductions.

We denote the degree of a vertex $v$ by $deg(v)$, where the underlying graph can be determined from its context.

\begin{claim}\label{clm:deg2_cant_entry_exit}
If vertex $v$ is on an entry-exit pair, then $deg(v) > 2$.
\end{claim}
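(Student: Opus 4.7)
The plan is to exhibit three distinct neighbors of $v$. Fix a cycle $C$ and an entry-exit pair $(s', t')$ on $C$ such that $v \in \{s', t'\}$; by symmetry I would handle the case $v = s'$, the exit case being identical with the roles of $s$ and $t$ swapped. Since $C$ is a simple cycle passing through $v$, it already contributes two distinct neighbors of $v$, call them $a$ and $b$.

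To find a third neighbor, I would invoke Condition~4 of \cref{def:entry_exit}: the $s$-$s'$ path shares exactly one vertex with $C$, namely $v$ itself. Let $w$ be the vertex immediately preceding $v = s'$ on this path. Then $w$ is a neighbor of $v$ with $w \notin C$, so in particular $w \notin \{a, b\}$. Hence $v$ has at least three distinct neighbors, giving $\deg(v) \ge 3$. The argument for $v = t'$ uses the $t'$-$t$ path together with the symmetric half of Condition~4.

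The only subtle point is making sure the vertex $w$ actually exists, i.e., that the $s$-$s'$ path contains at least one edge ($s \ne s'$, and symmetrically $t \ne t'$ in the exit case). I would read the phrase ``path from $s$ to the entry vertex $s'$'' in \cref{def:entry_exit} as asserting a genuine traversal with at least one edge; this is consistent with how the notation $u$-$v$ is introduced earlier and with the surrounding material, where Reductions~\ref{reduction:3} and~\ref{reduction:4} already carve out $\{s, t\}$ as a separate case. Beyond that, the claim is a direct unpacking of the disjointness and intersection conditions in \cref{def:entry_exit} and does not require any extra ideas.
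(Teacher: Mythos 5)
Your proof is correct and fills in exactly the reasoning behind the paper's one-line ``Trivial by \cref{def:entry_exit}'': two neighbors of $v$ come from the cycle $C$, and a third comes from the adjacent vertex on the $s$--$s'$ (or $t'$--$t$) path, which Condition~4 guarantees lies off $C$. Your caveat about $s = s'$ is well taken: read literally, the claim can fail for $v \in \{s,t\}$ (e.g.\ a graph that is a single cycle with $s$ and $t$ on it), but the paper only ever invokes the claim for vertices removed by Reductions~\ref{reduction:2}--\ref{reduction:4}, which are always taken outside $\{s,t\}$, so your reading is consistent with how the claim is used.
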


\begin{proof}
Trivial by \cref{def:entry_exit}.
\end{proof}

\begin{claim}
\hyperref[reduction:2]{Reductions~2}, \hyperref[reduction:3]{3} and \hyperref[reduction:4]{4} maintain the property that every cycle in $G$ contains at least one entry-exit pair (see \cref{lem:cycle_has_entry_exit}).
\end{claim}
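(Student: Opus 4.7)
The plan is to treat each of \hyperref[reduction:2]{Reductions~2}, \hyperref[reduction:3]{3}, and \hyperref[reduction:4]{4} uniformly. Fix any simple cycle $C'$ in the reduced graph $G'$ and lift it to a cycle $C$ in the original graph $G$ as follows: for \hyperref[reduction:3]{Reductions~3} and \hyperref[reduction:4]{4}, $G'$ is a subgraph of $G$, so take $C=C'$; for \hyperref[reduction:2]{Reduction~2}, if $C'$ uses the edge inserted when removing the degree-$2$ vertex $v$ (with neighbors $x,y$), let $C$ be $C'$ with that edge replaced by the length-two path $x - v - y$, otherwise $C=C'$. By hypothesis, $C$ has an entry-exit pair $(a,b)$ in $G$ with paths $p_1$ from $s$ to $a$ and $p_2$ from $b$ to $t$ satisfying \cref{def:entry_exit}.

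By \cref{clm:deg2_cant_entry_exit}, neither $a$ nor $b$ can be a degree-$2$ vertex, hence they differ from $v$ and from every other vertex of degree $2$ touched by the reduction. In particular $a,b\in V(C')$, so the remaining task is to realize $p_1$ and $p_2$ as paths in $G'$. If neither uses $v$, the pair $(a,b)$ together with $(p_1,p_2)$ already witnesses an entry-exit pair for $C'$ in $G'$. Otherwise, without loss of generality $v\in p_1$, and since $deg(v)=2$ in $G$ with neighbors $x,y$, the path $p_1$ must contain the segment $x - v - y$; I reroute this segment in $G'$ via the newly added edge $xy$ for \hyperref[reduction:2]{Reduction~2}, via the surviving triangle edge $xy$ for \hyperref[reduction:3]{Reduction~3}, and via the parallel degree-$2$ vertex $u$ as $x - u - y$ for \hyperref[reduction:4]{Reduction~4}.

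The main obstacle is then to check that the rerouted path $p_1'$ remains simple, respects Condition~4 of \cref{def:entry_exit} (sharing only $a$ with $C'$), and stays vertex-disjoint from $p_2$. For \hyperref[reduction:2]{Reductions~2} and \hyperref[reduction:3]{3} this is essentially immediate, since the reroute injects no new vertex: $V(p_1')=V(p_1)\setminus\{v\}$, so the intersection with $V(C')$ remains $\{a\}$ and disjointness from $p_2$ is inherited. The delicate case is \hyperref[reduction:4]{Reduction~4}, where the reroute introduces the fresh vertex $u$ into $p_1'$, and I must rule out $u\in p_1\setminus\{v\}$, $u\in p_2$, and $u\in C'$. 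The governing observation is that $u$ and $v$ are ``twins'': $deg(u)=deg(v)=2$ with common neighborhood $\{x,y\}$, so any simple path or cycle through $u$ must contain both $x$ and $y$. Since $p_1$ already traverses $x - v - y$, simplicity forbids $u\in p_1$; vertex-disjointness of $p_1$ and $p_2$ gives $x,y\notin p_2$ and hence $u\notin p_2$; and $u\in C'$ would force $\{x,y\}\subseteq C'\subseteq V(C)$, contradicting Condition~4 applied to $p_1$ and $C$ in $G$, which bounds $p_1\cap V(C)$ by the single vertex $a$ while $p_1$ contains both $x$ and $y$. These three observations together validate the reroute and complete the argument.
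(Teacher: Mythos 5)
Your proof is correct and rests on the same key observation as the paper's one-line argument: the reductions only delete degree-2 vertices, which by \cref{clm:deg2_cant_entry_exit} cannot serve as entry or exit vertices, so entry-exit pairs survive. The paper leaves implicit the lifting of cycles from $G'$ back to $G$ and the rerouting of the witness paths $p_1,p_2$ into $G'$ (including the ``twin'' check needed for \hyperref[reduction:4]{Reduction~4}); your version fills in this bookkeeping carefully, which is a genuine gain in rigor over the paper's terse justification.
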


\begin{proof}
\hyperref[reduction:2]{Reductions~2}, \hyperref[reduction:3]{3} and \hyperref[reduction:4]{4} only erase faces and vertices of degree 2, which cannot be in entry-exit pairs (by \cref{clm:deg2_cant_entry_exit}), so every simple cycle of the graph still contains an entry-exit pair.
\end{proof}

\begin{restatable}{lemma}{lemrtwo}\label{lem:reduction2}
$\star$ \hyperref[reduction:2]{Reduction~2} is safe and can be done in polynomial time, if done after \hyperref[reduction:1]{Reduction~1}.
\end{restatable}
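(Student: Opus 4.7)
The plan is to prove safety by showing that any tracking set $T'$ of the reduced graph $G'$ is also a tracking set of the original graph $G$; this is the natural reading of "no untracked cycle is eliminated", since if a tracker placement covers every simple cycle of $G'$ it must then cover every simple cycle of $G$. Let $u, v$ be adjacent degree-2 vertices with $v$ removed, and let $u'$ and $w$ be the other neighbors of $u$ and $v$ respectively; \hyperref[reduction:2]{Reduction~2} then replaces the sub-path $u' - u - v - w$ by $u' - u - w$ in $G'$.

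The core of the argument is a bijection between simple cycles of $G$ and of $G'$, together with a matching of entry-exit pairs. A cycle of $G$ that contains $v$ must traverse $u - v - w$ (since $v$ has degree 2), and it maps to the cycle of $G'$ obtained by substituting the edge $u - w$ for this sub-path; cycles of $G$ avoiding $v$ are unchanged. For each corresponding pair of cycles $C$ and $C'$ I will show that their sets of entry-exit pairs coincide. The key lever is \cref{clm:deg2_cant_entry_exit}, which guarantees that neither $u$ nor $v$ can appear as an entry or exit vertex; combined with the ``share exactly one vertex'' requirement from \cref{def:entry_exit}, this forces any entry-exit path in $G$ that visits $v$ to occur only when $v \notin C$, and in that case the substitution $u - v - w \leftrightarrow u - w$ produces the corresponding entry-exit path in $G'$ (and symmetrically, a path in $G'$ using the new edge cannot traverse two vertices of $C'$ unless $u$ or $w$ coincides with the entry/exit endpoint). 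I expect the main obstacle to be this case analysis, distinguishing whether $v \in C$ and whether the endpoint $s'$ (or $t'$) coincides with $w$, but each case reduces to a short local check. Once entry-exit pairs are matched, tracking preservation is immediate: any tracker of $T'$ in $C' \setminus \{s', t'\}$ also lies in $C \setminus \{s', t'\}$, because $C' \subseteq C$ by construction, so $C$ is tracked in $G$ whenever $C'$ is tracked in $G'$.

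A small separate verification confirms that \hyperref[reduction:2]{Reduction~2} does not introduce self-loops or parallel edges, using that \hyperref[reduction:1]{Reduction~1} has eliminated triangles whose two degree-2 vertices are not in $\{s, t\}$ (otherwise $u$ or $v$ could not lie on any $s - t$ path, contradicting the invariant established by \hyperref[reduction:1]{Reduction~1}). For the running time, scanning for adjacent degree-2 vertices and performing the local modification takes $O(|V| + |E|)$ time; since each step strictly decreases $|V|$, the entire reduction terminates after at most $|V|$ iterations, yielding an overall polynomial running time. No subsequent re-application of \hyperref[reduction:1]{Reduction~1} is needed, because the new edge $u - w$ lies on every $s - t$ path that previously traversed $u - v - w$, so the property that every vertex and edge lies on some $s - t$ path is preserved.
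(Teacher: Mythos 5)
Your proof takes a genuinely different route from the paper's. The paper argues safety by showing that some minimum tracking set of $G$ avoids having trackers on both $u$ and $v$ simultaneously: they take a minimum tracking set $T$ containing both, move the tracker from $u$ to $u$'s other neighbor $u'$, and argue (via the observation that $u$ and $u'$ are always consecutive on any $s$--$t$ path) that the result remains a tracking set of no larger size. This is an argument about the \emph{original} graph's optimum. You instead argue directly that any tracking set of the reduced graph $G'$ is already a tracking set of $G$, by setting up a bijection between simple cycles of $G$ and $G'$ (pairing $u$--$v$--$w$ subpaths with the new $u$--$w$ edge), then matching entry-exit pairs across the bijection using \cref{clm:deg2_cant_entry_exit} and the ``share exactly one vertex'' clause of \cref{def:entry_exit}, and finally observing that a tracker in $C' \setminus \{s',t'\}$ is also in $C \setminus \{s',t'\}$. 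Your direction is, if anything, the one more literally needed by the paper's stated definition of ``safe'' (no untracked cycle is eliminated), and is the property that \cref{lem:A_trackingset} actually invokes when it says ``assume without loss of generality that $G$ is reduced.'' The paper's argument is shorter but shows a complementary property (that the optimum does not degrade). Both are legitimate; yours is more thorough on the correctness side, the paper's more economical.

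One point you should tighten: your plan treats the ``case analysis'' (whether $v \in C$ and whether $s'$ or $t'$ coincides with $w$) as a placeholder rather than carrying it out. When you do, the crux is that if $v\in C$ then an entry-exit path cannot pass through $v$ (it would hit both $u$ and $w$, two cycle vertices, contradicting ``share exactly one vertex''), and that if $v\notin C$ the substitution $u$--$v$--$w \leftrightarrow u$--$w$ preserves the ``one shared vertex'' condition because $u \notin C$ (a cycle through degree-2 $u$ must use both of $u$'s edges and hence $v$). This all goes through, but it needs to be said. Also, your no-parallel-edge verification flags the case $u$ or $v \in \{s,t\}$ as excepted but does not resolve it; the paper's proof has the same implicit assumption, so you are not worse off, but if you are aiming for a self-contained argument you should either note that the algorithm never removes $s$ or $t$, or handle that subcase explicitly.
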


\begin{restatable}{lemma}{lemrthree}\label{lem:reduction3}
$\star$ \hyperref[reduction:3]{Reduction~3} is safe and can be done in polynomial time, if done after \hyperref[reduction:1]{Reduction~1}.
\end{restatable}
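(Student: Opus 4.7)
The plan is to prove safety via a short structural argument based on \cref{clm:deg2_cant_entry_exit}, and to dispose of the polynomial-time claim with routine counting.

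First, I would fix notation: let $v \notin \{s,t\}$ be a degree-$2$ vertex whose neighbors $s'$ and $t'$ are themselves adjacent, so that $\{v,s',t'\}$ is the required 3-cycle. The key structural observation is that, because $\deg(v)=2$, every simple cycle $C'$ of $G$ containing $v$ must use both edges $vs'$ and $vt'$. Consequently, the cycles destroyed by deleting $v$ and its incident edges are precisely the simple cycles through $v$; every cycle of $G$ avoiding $v$ survives the reduction unchanged.

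The heart of the proof is to show that the newly placed tracker at $v$ already tracks every cycle through $v$. By \cref{clm:deg2_cant_entry_exit}, a degree-$2$ vertex cannot appear as the entry or exit vertex of any entry-exit pair. Applied to any cycle $C'$ through $v$, this gives $v \in C' \setminus \{s'',t''\}$ for every entry-exit pair $(s'',t'')$ of $C'$, so the tracker at $v$ witnesses that $(s'',t'')$ is tracked. Thus every cycle eliminated by the reduction is tracked (in the sense of \cref{lem:tracking_set}), which is exactly the safety condition.

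For the polynomial-time part, each iteration only needs to scan the vertex set for a degree-$2$ candidate whose two neighbors are adjacent and then delete a single vertex; the loop terminates in at most $|V|$ rounds since each application strictly decreases $|V|$. I would also remark that the hypothesis ``done after Reduction 1'' is not strictly required for safety itself; its role is to preserve the invariant of \cref{lem:cycle_has_entry_exit} that every cycle carries an entry-exit pair, which is used elsewhere in the approximation analysis and in arguing that $v$ belongs to every optimal tracking set. The only nontrivial step is the invocation of \cref{clm:deg2_cant_entry_exit}, so I do not anticipate any serious obstacle.
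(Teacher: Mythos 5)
Your proof is correct, and it arrives at the safety conclusion by a cleaner and more self-contained route than the paper's. The paper first uses \cref{lem:cycle_has_entry_exit} together with \cref{clm:deg2_cant_entry_exit} to show that $(s',t')$ must form an entry-exit pair of the triangle, deduces that \emph{every} feasible tracking set must contain $v$, and then asserts (somewhat tersely) that removing $v$ is safe because $v$ is neither a cut-vertex nor in any entry-exit pair. You instead go straight for the definition of safety: you identify the destroyed cycles as exactly those through $v$, and show the freshly placed tracker on $v$ tracks all of them because, by \cref{clm:deg2_cant_entry_exit}, $v$ lies in $C'\setminus\{s'',t''\}$ for every entry-exit pair $(s'',t'')$ of every such cycle $C'$. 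This makes the "does not eliminate any untracked cycle" claim fully explicit rather than asserted, and it correctly isolates \cref{clm:deg2_cant_entry_exit} as the only load-bearing ingredient. Your side remark that the "after Reduction~1" hypothesis is not needed for the safety statement itself is also accurate; the paper needs it to invoke \cref{lem:cycle_has_entry_exit} and thereby argue that $v$ is forced into every feasible tracking set, which justifies the design of the reduction (the tracker on $v$ is never wasteful) but is not formally part of what "safe" means in this paper. The one thing your proof drops relative to the paper is precisely that necessity-of-$v$ argument; while the approximation analysis does not strictly rely on it (the trackers placed by Reductions~3 and~4 are charged to removed faces regardless), it is worth retaining as a sanity check on the rule.
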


\begin{restatable}{lemma}{lemrfour}\label{lem:reduction4}
$\star$ \hyperref[reduction:4]{Reduction~4} is safe and can be done in polynomial time, if done after \hyperref[reduction:1]{Reduction~1}.
\end{restatable}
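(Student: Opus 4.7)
The plan is to establish safety via \cref{lem:tracking_set} by showing that, if $T'$ is a tracking set of $G':=G\setminus\{u\}$, then $T := T' \cup \{u\}$ is a tracking set of $G$, where $u$ denotes the degree-$2$ vertex removed by \hyperref[reduction:4]{Reduction~4} and $v$ its counterpart on the $4$-cycle $s'\text{-}u\text{-}t'\text{-}v\text{-}s'$. First I would partition the simple cycles of $G$ according to whether they contain $u$: cycles not containing $u$ are precisely the simple cycles of $G'$, while cycles containing $u$ must traverse the segment $s'\text{-}u\text{-}t'$ in one of the two directions, since $u$ has neighbor set $\{s',t'\}$.

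For every simple cycle $C$ containing $u$, the tracker at $u$ alone suffices: by \cref{clm:deg2_cant_entry_exit}, $u$ cannot serve as the entry or exit vertex of any entry-exit pair of $C$, so $u \in C\setminus\{s_e,t_e\}$ for every such pair $(s_e,t_e)$. For a simple cycle $C$ of $G'$, I would show that the entry-exit pairs of $C$ relative to $G$ coincide with those relative to $G'$, from which the fact that $T'$ tracks $C$ in $G'$ immediately lifts to tracking in $G$ by $T\supseteq T'$. The inclusion from $G'$ to $G$ is trivial; for the reverse, given certifying paths $P_1,P_2$ in $G$ for an entry-exit pair $(s_e,t_e)$ of $C$, I would first observe that if $s'\in C$ or $t'\in C$, then any path through $u$ would touch $C$ at an extra vertex beyond its specified endpoint on $C$, violating the entry-exit definition, so no certifying path uses $u$ and we are done. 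Otherwise I would reroute the $s'\text{-}u\text{-}t'$ segment of the $u$-using path through $v$ to produce certifying paths in $G'$.

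The main obstacle will be verifying that the $v$-reroute yields a simple path that remains vertex-disjoint from the other certifying path. My argument rests on $v$ having degree exactly $2$ with neighbors $\{s',t'\}$: any interior occurrence of $v$ on a simple path must be flanked by $s'$ and $t'$ in that path, and combining this with the simplicity of the path traversing $u$ and the vertex-disjointness of $P_1$ from $P_2$ yields a contradiction in every configuration in which $v$ could already appear on either $P_1$ or $P_2$. Polynomial running time then follows from a straightforward scan over degree-$2$ vertices $u\notin\{s,t\}$ that checks, in time proportional to $\min(deg(s'),deg(t'))$, whether the two neighbors of $u$ share a second degree-$2$ neighbor $v\notin\{s,t\}$; each successful application removes one vertex, so at most $|V|$ applications are performed.
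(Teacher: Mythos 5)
Your approach is genuinely different from the paper's. The paper argues at a higher level: it observes that $(s',t')$ is an entry-exit pair of the 4-cycle (so by \cref{lem:tracking_set} any tracking set must already contain $u$ or $v$, justifying the forced tracker), and then asserts safety because $u$ is not a cut-vertex and, by \cref{clm:deg2_cant_entry_exit}, cannot lie on any entry-exit pair. You instead prove a concrete ``lifting'' statement: any tracking set $T'$ of $G'=G\setminus\{u\}$ extends to a tracking set $T'\cup\{u\}$ of $G$, handling $u$-cycles via \cref{clm:deg2_cant_entry_exit} and $u$-free cycles by showing their entry-exit pairs are preserved under a $v$-reroute. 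This is a more granular argument and actually fills in a detail the paper's one-sentence justification glosses over (that removing $u$ does not destroy any entry-exit evidence for the remaining cycles). Conversely, you do not reproduce the paper's observation that the 4-cycle forces a tracker on $u$ or $v$ --- that part is not needed for safety as defined, so its omission is harmless.

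One concrete slip: your claim that ``if $s'\in C$ or $t'\in C$ then no certifying path uses $u$'' fails when exactly one of $s',t'$ lies on $C$. For instance, with $s'\in C$ and $t'\notin C$, the entry vertex could be $s_e=s'$ and a certifying path $s \to \dots \to t' \to u \to s'$ is perfectly legal: it meets $C$ only at its endpoint $s'$, so nothing in \cref{def:entry_exit} is violated. The correct dichotomy is \emph{both} $s',t'\in C$ (then a $u$-path meets $C$ at two vertices, contradiction) versus not both. Fortunately, your $v$-reroute argument handles the ``not both'' case uniformly --- the same reasoning (an interior $v$ must be flanked by $s'$ and $t'$, which clashes with the simplicity of the $u$-using path and with vertex-disjointness from the other certifying path) still yields $v\notin P_1\cup P_2$ --- so the slip is easy to repair by dropping the incorrect shortcut and applying the reroute whenever at most one of $s',t'$ lies on $C$.
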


\begin{remark}
None of \hyperref[reduction:2]{Reductions~2}, \hyperref[reduction:3]{3} and \hyperref[reduction:4]{4} compromise planarity.  
\end{remark}

\begin{algorithm}
\SetKwInOut{Input}{Input}\SetKwInOut{Output}{Output}
\Input{Undirected planar graph $G=(V,E)$ and vertices $s\in V$ and $t\in V$}
\Output{Tracking set}

\BlankLine
Perform \hyperref[reduction:1]{Reduction~1} in $G$

Perform \hyperref[reduction:2]{Reductions~2}, \hyperref[reduction:3]{3} and \hyperref[reduction:4]{4} repeatedly until $G$ is reduced.

Output remaining vertices of degree at least 3 (except $s$ or $t$)
\caption{$\mathcal{A}$}\label{alg:planar}
\end{algorithm}

\begin{lemma}\label{lem:A_trackingset}
  Algorithm $\A$ outputs a tracking set for the input graph $G$.
\end{lemma}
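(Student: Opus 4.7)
The strategy is to combine the safety guarantees of \hyperref[reduction:1]{Reductions~1}--\hyperref[reduction:4]{4} (\cref{lem:reduction1,lem:reduction2,lem:reduction3,lem:reduction4}) with \cref{lem:tracking_set}, so that it suffices to argue that every simple cycle $C$ in the reduced graph $G'$ produced after Step~2 of $\A$ is tracked by $T$: cycles of $G$ that disappear under the reductions are either absorbed into tracker placements made by \hyperref[reduction:3]{Reductions~3} or \hyperref[reduction:4]{4}, or contracted by \hyperref[reduction:1]{Reductions~1}/\hyperref[reduction:2]{2}, and safety says exactly that no untracked cycle is lost in this process. The remaining task is then local: pick any simple cycle $C$ of $G'$ and any entry-exit pair $(s',t')$ with respect to $C$, and exhibit a tracker in $C \setminus \{s',t'\}$.

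\textbf{Reducing to a degree condition.} Two observations pin down what such a tracker must look like. First, \cref{clm:deg2_cant_entry_exit} forces $\deg(s'),\deg(t')\geq 3$. Second, directly from \cref{def:entry_exit}, neither $s$ nor $t$ can lie in $C\setminus\{s',t'\}$: if $s\in C$, then the path $s-s'$ already shares $s$ with $C$, which forces $s = s'$, and symmetrically for $t$. Together, these facts imply that any vertex of $C\setminus\{s',t'\}$ whose degree in $G'$ is at least $3$ is emitted as a tracker by Step~3 of $\A$. The problem therefore collapses to showing that $C \setminus \{s',t'\}$ cannot consist entirely of degree-$2$ vertices of $G'$.

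\textbf{Case analysis on $|C|$.} The three non-trivial reductions each rule out one size regime. If $|C|=3$, the lone vertex $v \in C\setminus\{s',t'\}$ satisfies $v \notin \{s,t\}$; if $\deg(v)=2$, then $v$ is a degree-$2$ vertex in a $3$-cycle, contradicting that $G'$ is irreducible under \hyperref[reduction:3]{Reduction~3}. If $|C|=4$, then $C\setminus\{s',t'\}$ consists of two vertices $u,v\notin\{s,t\}$; if both have degree $2$, they are either adjacent in $C$ (contradicting \hyperref[reduction:2]{Reduction~2}) or opposite in the $4$-cycle (contradicting \hyperref[reduction:4]{Reduction~4}). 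If $|C|\geq 5$, suppose for contradiction that every vertex of $C\setminus\{s',t'\}$ had degree $2$; since $|C\setminus\{s',t'\}|\geq 3$, some sub-arc of $C$ between $s'$ and $t'$ would contain two consecutive degree-$2$ vertices, again contradicting \hyperref[reduction:2]{Reduction~2}. In every case $C\setminus\{s',t'\}$ meets $T$, so $(s',t')$ is tracked; as $(s',t')$ was arbitrary, $C$ is tracked.

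\textbf{Expected obstacle.} The argument is not deep, but the one delicate point will be handling $s$ and $t$ when they lie on $C$: without the observation that any occurrence of $s$ (resp.\ $t$) on $C$ is forced to be the entry vertex $s'$ (resp.\ exit vertex $t'$) of any entry-exit pair, one might worry that the unique interior vertex of a $3$-cycle fails to be emitted by $\A$ because it coincides with $s$ or $t$. Once that point is settled, the remainder is a uniform application of the reduction rules.
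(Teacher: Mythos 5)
Your proof takes essentially the same approach as the paper's (invoke safety of the reductions to pass to a reduced graph, then split on cycle length and use the degree constraints imposed by \hyperref[reduction:2]{Reductions~2}--\hyperref[reduction:4]{4}), and it is correct. The one place you go beyond the paper is in explicitly checking, via condition~4 of \cref{def:entry_exit}, that $s$ and $t$ can never lie in $C\setminus\{s',t'\}$ — so Step~3's exclusion of $s,t$ never costs a needed tracker — whereas the paper buries this in the informal notion of a ``trivially tracked'' cycle, which makes your writeup tighter on exactly the point you flagged as delicate.
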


\begin{proof}
By \cref{lem:reduction1,,lem:reduction2,,lem:reduction3,lem:reduction4}, \hyperref[reduction:1]{Reductions~1}\hyperref[reduction:4]{-4} are safe, so let us assume without loss of generality that $G$ is reduced. Then, every cycle of $G$ of 5 or more vertices is trivially tracked, because it must contain at least 3 vertices of degree at least 3 (by \hyperref[reduction:2]{Reduction~2}). Similarly, every 3- or 4-cycle must contain at least 3 vertices of degree at least 3 by \hyperref[reduction:2]{Reduction~2} and \hyperref[reduction:3]{Reductions~3} and \hyperref[reduction:4]{4} (respectively).
\end{proof}

\begin{lemma}\label{lem:A_bound}
  Algorithm $\A$ outputs a tracking set of size at most $2(|F|-2)$, where $F$ is the set of faces of the input graph $G$.
\end{lemma}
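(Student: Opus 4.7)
The strategy is to combine Euler's formula on the fully reduced graph with a degree-counting argument. The algorithm's output consists of the trackers placed during \hyperref[reduction:3]{Reductions~3} and \hyperref[reduction:4]{4} (say $k_3$ and $k_4$ of them, respectively), together with the set $S$ of vertices of degree at least 3 (excluding $s, t$) in the reduced graph $G'$, produced in the final step of $\A$. I would first bound $|S|$ in terms of the face count of $G'$, and then absorb $k_3 + k_4$ by tracking how the face count changes throughout the reductions.

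Let $n', m', f'$ denote the numbers of vertices, edges, and faces of $G'$. By \hyperref[reduction:1]{Reduction~1}, every vertex of $V(G') \setminus \{s,t\}$ lies on some $s$-$t$ path and hence has degree at least 2; assuming for now that $deg(s), deg(t) \ge 2$ as well (we return to the other case below), we obtain
\[ 2m' \;=\; \sum_{v \in V(G')} deg(v) \;\ge\; 3|S| + 2(n' - |S|) \;=\; 2n' + |S|. \]
Combining this with Euler's formula $n' - m' + f' = 2$ immediately yields $|S| \le 2(f' - 2)$.

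Next, I would track how the face count evolves through the reductions. \hyperref[reduction:1]{Reduction~1} only removes edges and vertices, so it cannot increase the face count. \hyperref[reduction:2]{Reduction~2} removes one vertex and one net edge (two edges removed, one added), so by Euler's formula it preserves the face count. Each application of \hyperref[reduction:3]{Reduction~3} or \hyperref[reduction:4]{4} removes one vertex and two edges, decreasing the face count by exactly 1. Hence $k_3 + k_4 \le |F| - f'$, and the total output size is bounded by
\[ k_3 + k_4 + |S| \;\le\; (|F| - f') + 2(f' - 2) \;=\; |F| + f' - 4 \;\le\; 2(|F| - 2), \]
where the last inequality uses $f' \le |F|$. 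The main obstacle in this argument is the corner case $deg(s) = 1$ or $deg(t) = 1$: if $s$ has a unique neighbor $s'$, then the edge $ss'$ lies on every $s$-$t$ path, so deleting $s$ and treating $s'$ as the new source preserves both the face count and the tracking structure; iterating this preprocessing reduces the analysis to the case $deg(s), deg(t) \ge 2$ handled above.
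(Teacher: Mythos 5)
Your proof takes essentially the same route as the paper: charge each tracker placed by Reductions 3 and 4 to the face it removes, and bound the degree-$\ge 3$ vertices in the reduced graph via degree counting plus Euler's formula. The paper compresses the face-count bookkeeping into the single sentence ``it is enough to show that the lemma holds with respect to a reduced graph,'' whereas you spell out that Reduction 1 cannot increase the face count, Reduction 2 preserves it, and Reductions 3/4 each decrease it by one, giving $k_3+k_4+|S| \le |F|+f'-4 \le 2(|F|-2)$ --- this is precisely the implicit calculation. Where you go beyond the paper is in flagging the $\deg(s)=1$ or $\deg(t)=1$ corner case: the paper asserts ``there cannot be vertices of degree 1,'' which after Reduction~1 is true for every vertex in $V\setminus\{s,t\}$ but not necessarily for $s$ or $t$ themselves (a pendant source or sink survives all four reductions), so you are right that the degree count needs an extra argument there. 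Your proposed fix, however, is incomplete as stated: deleting a pendant $s$ and promoting $s'$ to be the source preserves $|F|$ and the set of cycles, but Algorithm~$\A$ is run on the original graph, so you would still need to relate $\A$'s output on $G$ to its output on the preprocessed graph (e.g.\ the new source may have been a degree-$\ge 3$ vertex that $\A$ outputs on $G$ but not on the preprocessed graph). A cleaner patch is to redo the degree count keeping the degree-1 vertices, which only weakens the bound by an additive $|V_1|\le 2$; this still yields $\ALG\le 2(|F|-1)$ and hence the 4-approximation when combined with \cref{lem:OPT_bound}.
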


\begin{proof}
Notice that each tracker added during \hyperref[reduction:3]{Reductions~3} and \hyperref[reduction:4]{4} is associated with the removal of one face from $G$. Therefore, it is enough to show that the lemma holds with respect to a reduced graph $G$. Let us partition $V$ into $V=(V_{2} \cup V_{\ge 3})$, where $V_2$ and $V_{\ge 3}$ consist of the vertices of degree 2 and degree at least 3, respectively (notice that there cannot be vertices of degree 1). The lemma statement follows from \cref{lem:A_trackingset} and the fact that $|F| \ge \frac{|V_{\ge 3}|}{2} + 2$. This inequality can be derived by plugging in the inequality $2|E| \ge 3|V_{\ge 3}| + 2|V_2|$ in Euler's formula for planar graphs: $|V|-|E|+|F|= 2$, where $E$ is the set of edges of $G$.
\end{proof}

A tight example is illustrated in \cref{subfig:ALG_tight}.

\begin{mdframed}
\begin{proof}[Proof of \cref{thm:4approx}]
  By \cref{lem:A_bound,lem:OPT_bound}, Algorithm $\A$ is a 4-approximation to \ptracking{}.
\end{proof}
\end{mdframed}

\section{Hardness of tracking paths}\label{sec:hardness}

We show that \ptracking{} is NP-hard, by reducing from \psat{}, a special version of the satisfiability problem, shown to be NP-complete by Lichtenstein \cite{lichtenstein1982planar}.

In \sat{}, we are given a set $\X=\{x_1,\dots,x_p\}$ of variables and a 3-CNF formula $\phi$, where each clause in $\phi$ is a disjunction of exactly three distinct literals with respect to $\X$. The goal is to find a boolean assignment to all variables in $\X$ that satisfies $\phi$. Consider the bipartite graph with a vertex for each clause $C$ in $\phi$ and each variable $x_i\in \X$, and edges $(x_i, C)$ if and only if $C$ contains $x_i$ or its negation $\overline{x_i}$. Lichtenstein \cite{lichtenstein1982planar} showed that \psat{}, the subset of instances of \sat{} whose underlying bipartite graph is planar, remains NP-complete. In particular, the definition of \psat{} requires that a cycle can be drawn connecting all of the variables while maintaining planarity. Later, Knuth and Raghunatan \cite{knuth1992problem} exploited this condition to show that we can always draw the underlying bipartite graph of a \psat{} instance in a \emph{rectilinear} fashion without crossings (example in \cref {subfig:rect_sat}): variables are arranged in a horizontal line and clauses are horizontal line segments with vertical legs to represent the literals present in the clause. Vertical legs attach to the appropriate variables and are labeled \emph{red} for negated literals and \emph{blue}, otherwise. In particular, a given clause is drawn completely above or below the line of variables.

We convert a planar rectilinear drawing $\D$ of an instance of \psat{}, with formula $\phi$ and a set $\X$ of variables, into a planar drawing $\G$ corresponding to the instance of \ptracking{}. The reduction is straightforward: 

\begin{enumerate}[(i)]
  \item transform each variable $x_i$ in $\D$ into a gadget containing $m_i$ copies of literal vertices $x_i$ and $\overline {x_i}$;
  \item transform each 3-legged clause into a face containing corresponding literals vertices and an entry-exit pair;
  \item choose the boolean assignment according to the placement of trackers, such that a clause is satisfied if and only if its corresponding face is tracked.
\end{enumerate}

The union of all the variable and clause gadgets constitutes $\G$ (see example in \cref{fig:reduction_overview} in the appendix). Details of each gadget are given below. For simplicity, we avoid introducing too many subscripts and we rely on pictures to describe the gadgets.

\subparagraph*{Variable gadget.}\label{par:var_gadget}

Each variable gadget converts a variable $x_i$ in $\D$ into a connected subgraph corresponding to \cref{fig:var_gadget}, with length parameterized by $m_i$. We refer to the set $\{h_k,\mu_k,l_k\}$ as column $k$ and we refer to the vertices $\{h_1,\dots,h_{m_i}\} \cup \{l_1,\dots,l_{m_i}\}$ as \emph{literal vertices}.

Each variable gadget is linked with the next one by setting $t_i=s_{i+1}$, to form a horizontal chain of gadgets, where $s=s_1$ and $t=t_p$. For convenience, we force trackers in all the $s_i$ (except $s$), by drawing edges between the $\alpha'$ ($\beta'$) of a variable gadget and the $\alpha$ ($\beta$) of the next variable gadget in the chain. We refer to the resulting drawing as the \emph{spine}.

\begin{figure}
\centering
\includegraphics[width=280pt]{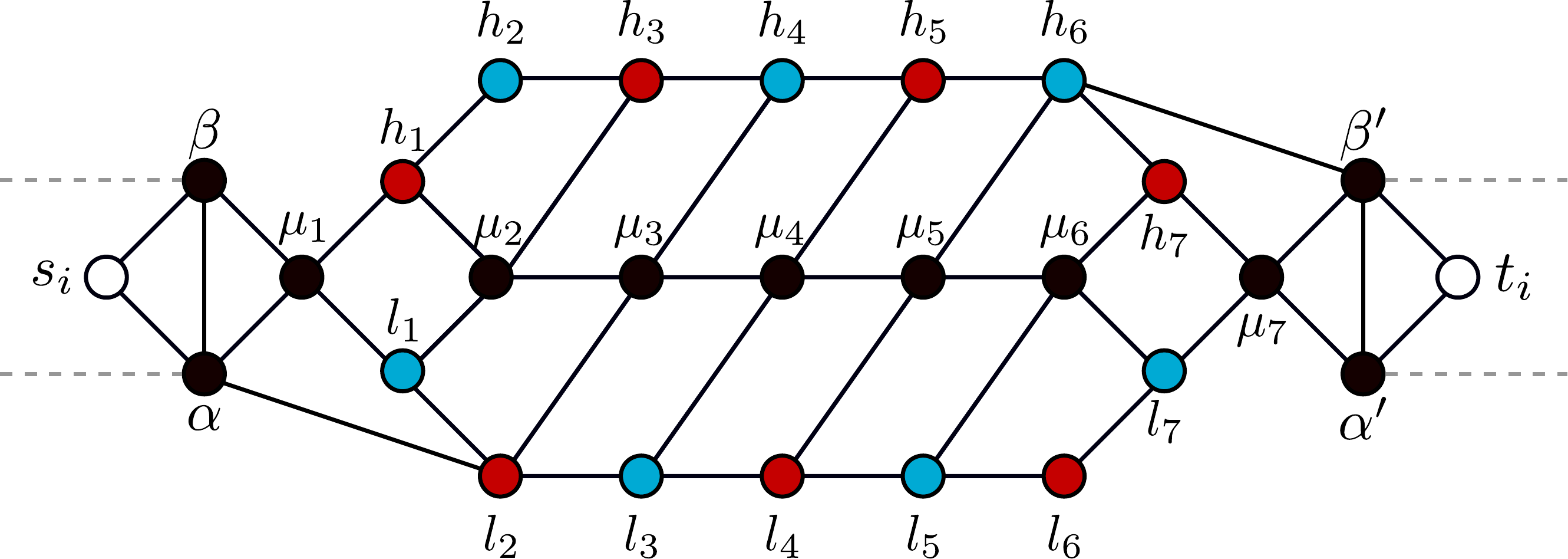}
\caption{Illustration of $x_i$'s gadget, containing $m_i$ vertices for $x_i$ (in blue) and $m_i$ vertices for $\overline{x_i}$ (in red). Vertices colored black require trackers in any minimum tracking set. The dashed edges are added to force trackers in $s_i$ and $t_i$.}
\label{fig:var_gadget}
\end{figure}

There are exactly two minimum tracking sets associated with the variable gadget with source $s_i$ and destination $t_i$. One of them corresponds to a true assignment of $x_i$ and the other one to a false assignment. Both of them require tracking the vertices in $R = \{\alpha, \alpha',\beta,\beta',\mu_1,\mu_{m_i}\}$, as well as the remaining $\mu_k$. In addition, the true assignment tracks the even-indexed $h_k$ and odd-indexed $l_k$, while the false assignment tracks the odd-indexed $h_k$ and even-indexed $l_k$. This requires $2m_i + 4$ trackers in total.

\begin{restatable}{lemma}{lemvargadgetmintrackingset}\label{lem:no_other_tracking_set}
$\star$ The true and false assignments are the only minimum tracking sets.
\end{restatable}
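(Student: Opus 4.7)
The plan is to use \cref{lem:tracking_set} and argue, cycle by cycle, that every minimum tracking set of the variable gadget (with source $s_i$ and destination $t_i$) contains a fixed set of ``forced'' vertices and, in addition, exactly one of $\{h_k,l_k\}$ for each column $k$, with the choice alternating between columns.

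First, I would argue that the $m_i+4$ vertices in $R \cup \{\mu_2,\dots,\mu_{m_i-1}\}$ are in every tracking set. The vertices $\alpha,\alpha',\beta,\beta'$ lie on short cycles (formed together with $s_i$, $t_i$, and the spine edges added to force trackers in $s_i$ and $t_i$) whose only non-entry/non-exit vertices are $\alpha,\alpha',\beta,\beta'$ themselves; hence by \cref{lem:tracking_set} each of them must be tracked. For each interior column $k$, the vertex $\mu_k$ similarly lies on a small cycle with $h_k$, $l_k$, and the adjacent-column vertices, in which $\mu_k$ is the unique non-entry/non-exit vertex capable of tracking the cycle; this forces $\mu_k \in T$. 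The endpoint columns $\mu_1,\mu_{m_i}$ are already in $R$.

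Next, I would show that for every column $k$ at least one of $\{h_k,l_k\}$ belongs to $T$. Each column together with its neighbors forms a cycle $C_k$ through $h_k$ and $l_k$ on which $\mu_k$ is an entry or exit vertex (and therefore cannot be used to track the relevant entry-exit pair). Then $C_k$ has an entry-exit pair that can only be tracked by a vertex in $\{h_k,l_k\}$. Combined with the $m_i+4$ forced trackers, this already gives the matching lower bound $2m_i+4$ on the tracking-set size, so in a minimum tracking set exactly one of $\{h_k,l_k\}$ is chosen for each $k$.

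The main obstacle, and the key step, is forcing the alternation between columns. Here I would examine the face spanning two consecutive columns $k$ and $k+1$, i.e.\ the 4-cycle (or short cycle) through $h_k,h_{k+1},l_{k+1},l_k$ that avoids the $\mu$-vertices. If the pair $(h_k,h_{k+1})$ is selected on both sides, this cycle contains an entry-exit pair both of whose non-$\{h_k,h_{k+1}\}$ vertices lie in $\{l_k,l_{k+1}\}$, neither of which is tracked, giving an untracked cycle; the same argument rules out consecutive $l$-choices. Hence $h_k \in T$ implies $l_{k+1} \in T$, and symmetrically, so the choice of $\{h_k,l_k\}$ strictly alternates with $k$. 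Finally, I would check that the two possible initial choices at $k=1$ correspond exactly to the true and false assignments described above, giving the claimed pair of minimum tracking sets.
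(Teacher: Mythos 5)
Your overall outline---force $R$ and the $\mu_k$'s, then one per column, then alternation---tracks what the paper does at a high level, but two of your key steps rest on structural claims about the gadget that do not hold, and the paper's proof is built precisely to avoid needing them.

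The first and most serious gap is your claim that each interior $\mu_k$ ``lies on a small cycle \dots in which $\mu_k$ is the unique non-entry/non-exit vertex capable of tracking the cycle,'' so that $\mu_k$ is forced into \emph{every} tracking set. That is not how the gadget works: the interior $\mu_k$ sit on square faces (not triangles), and on a square face there are in general two competing entry--exit pairs, so the face is satisfied by placing trackers on any $3$ of its $4$ vertices---$\mu_k$ need not be one of them. There are valid (non-minimum) tracking sets that omit an interior $\mu_k$, so no single cycle forces it. The paper sidesteps this by \emph{counting}: \cref{lem:3_trackers_column} proves that a minimum tracking set has exactly $2$ trackers per column, and the actual proof of \cref{lem:no_other_tracking_set} shows that if some interior $\mu_k$ were untracked, a chain of square-face and adjacency deductions would push some column up to $3$ trackers, contradicting minimality. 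The conclusion ``all $\mu_k$ are tracked'' is thus a statement about \emph{minimum} tracking sets only, obtained by contradiction, not by exhibiting a forcing cycle.

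The second gap is your alternation step, which appeals to a $4$-cycle through $h_k, h_{k+1}, l_{k+1}, l_k$ that avoids the $\mu$-vertices. The gadget's bounded faces between consecutive columns are $\{h_k, h_{k+1}, \mu_{k+1}, \mu_k\}$ and $\{\mu_k, \mu_{k+1}, l_{k+1}, l_k\}$; $h_k$ and $l_k$ are not adjacent, so the $4$-cycle you invoke is not present. The paper instead uses the cruder but correct observation that in this gadget no two adjacent vertices can both be untracked (every edge lies on a face whose other two vertices form an entry--exit pair), which, once all $\mu_k$ are known to be tracked and each column has exactly one of $\{h_k, l_k\}$, forces the alternation directly along the $h$-row and $l$-row edges. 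You should replace both the forced-$\mu_k$ cycle argument and the $4$-cycle alternation argument with the column-counting lemma and the vertex-cover-style adjacency observation.
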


\subparagraph*{Clause gadget.}

Let $C = \left (\ell_a \vee \ell_b \vee \ell_c \right)$ be a clause in $\phi$ with literals corresponding to variables $x_a, x_b, x_c \in \X$. Its gadget, depicted in \cref{fig:clause_gadget}, is a face $F_C$ consisting of:

\begin{itemize}
  \item literal vertex $\alpha$ from $x_a$'s gadget, corresponding to literal $\ell_a$;
  \item adjacent literal vertices $\beta_1,\overline{\beta_1},\beta_2,\overline{\beta_2},\beta_3$  from $x_b$'s gadget, corresponding to literals alternating between $\ell_b$ and $\overline{\ell_b}$;
  \item literal vertex $\gamma$ from $x_c$'s gadget, corresponding to literal $\ell_c$;
  \item edges $(\alpha,\gamma), (\alpha, \beta_1), (\beta_3,\gamma)$ and the edges from $x_b$'s gadget connecting all of the $\beta_k$ and $\overline{\beta_k}.$
\end{itemize}

\begin{figure}[b]
\centering
\includegraphics[width=\textwidth]{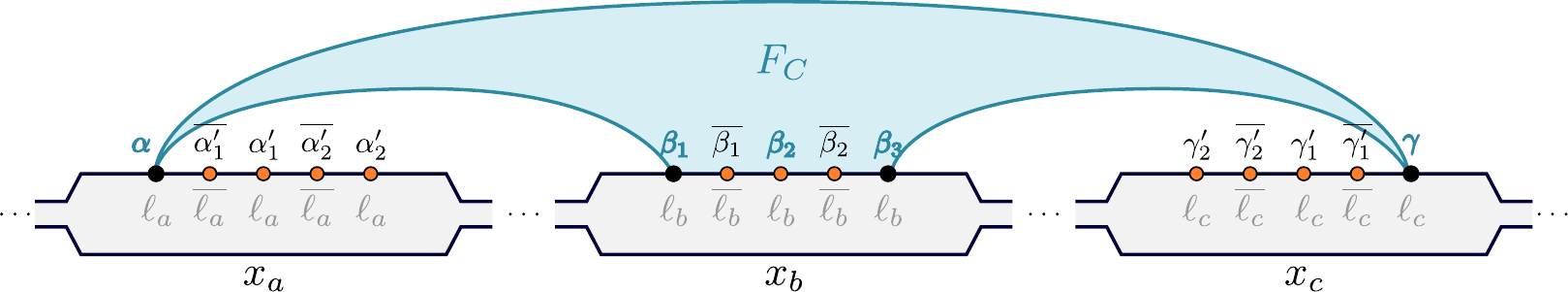}
\caption{Illustration of the gadget for clause $C=(\ell_a \vee \ell_b \vee \ell_c)$, where the vertices in each variable gadget are all adjacent. The entry-exit $(\overline{\beta_1},\overline{\beta_2})$ are responsible for satisfying $C$.}\label{fig:clause_gadget}
\end{figure}

We can increase the lengths of the variable gadgets to any polynomial that provides enough literal vertices for all clauses. Since $\D$ is planar, there are no crossings between clauses. We also impose the following restrictions:
\begin{enumerate}
  \item $\alpha$ cannot be one of $\{h_1,l_{m_a}\}$; this ensures that the only faces in $x_a$'s gadget that do not require 3 trackers do not become untracked. We apply the equivalent restriction to $\beta_1,\beta_3$ and $\gamma$. \label{restriction:on_faces_2trackers}
  \item The $\alpha'_k/\overline{\alpha'_k}$, (see \cref{fig:clause_gadget}) cannot belong to any other clause gadget; these correspond to the 4 literal vertices following $\alpha$ in $x_a$'s gadget and reserving them ensures that non-clause faces, between nested clauses, are tracked. We apply the equivalent restriction to the $\gamma'_k/\overline{\gamma'_k}$. \label{restriction:reserved_vertices}
  \item All literal vertices in a clause need to be on the same side of the spine; this restriction is trivial because $\D$ is rectilinear, but it simplifies the analysis. \label{restriction:same_side_spine}
\end{enumerate}

\begin{restatable}{lemma}{lemclausesatisfiediff}\label{lem:clause_satisfied_iff}
$\star$ Clause $C$ is satisfied if and only if its corresponding gadget face $F_C$ is tracked.
\end{restatable}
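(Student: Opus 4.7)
The plan is to combine \cref{lem:no_other_tracking_set} with the layout of the variable gadget to first establish a correspondence between trackers on literal vertices and satisfied literals, and then to count trackers on $F_C$.

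First, I would argue that in any minimum tracking set, each literal vertex for a literal $\ell$ carries a tracker if and only if $\ell$ evaluates to true under the corresponding boolean assignment. This follows from \cref{lem:no_other_tracking_set} once the gadget is laid out so that the columns at which $h_k$ (resp.\ $l_k$) represents the positive literal $x_i$ match the parity pattern of the two minimum tracking sets; by design, the column at which $h_k$ represents $x_i$ is exactly the one where the true assignment places a tracker on $h_k$, and symmetrically for $l_k$ and the false assignment. Restriction~\ref{restriction:on_faces_2trackers} prevents the boundary columns $k=1$ and $k=m_i$ from hosting $\alpha,\beta_1,\beta_3,\gamma$, so this correspondence applies uniformly to every literal vertex of $F_C$.

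Applying the correspondence: $\alpha$ carries a tracker iff $\ell_a$ is true; each of the three $\beta_k$ carries a tracker iff $\ell_b$ is true; each of the two $\overline{\beta_k}$ carries a tracker iff $\ell_b$ is false; and $\gamma$ carries a tracker iff $\ell_c$ is true. Thus $x_b$ always contributes exactly three trackers on $F_C$ (the $\beta_k$) when $\ell_b$ is true or exactly two trackers (the $\overline{\beta_k}$) when $\ell_b$ is false, and $\ell_a,\ell_c$ each add a tracker (at $\alpha$, $\gamma$) exactly when true. If $C$ is satisfied, at least one literal is true, so $F_C$ holds at least three trackers: either the three $\beta_k$ alone, or the two $\overline{\beta_k}$ plus at least one of $\alpha,\gamma$. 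A simple cycle with three or more trackers is trivially tracked, so every entry-exit pair of $F_C$ is tracked. Conversely, if $C$ is not satisfied, $F_C$ holds only the trackers at $\overline{\beta_1},\overline{\beta_2}$; a direct inspection of $\G$ confirms that $(\overline{\beta_1},\overline{\beta_2})$ is an entry-exit pair, with an $s$-$\overline{\beta_1}$ path arriving along the spine and a vertex-disjoint $\overline{\beta_2}$-$t$ path leaving along it, so $F_C$ is not tracked.

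The main obstacle will be the initial correspondence between trackers and truth values, since \cref{lem:no_other_tracking_set} is stated for a variable gadget in isolation while in $\G$ each gadget participates in many clause faces. Restriction~\ref{restriction:reserved_vertices} is essential here: by reserving the $\alpha'_k,\overline{\alpha'_k},\gamma'_k,\overline{\gamma'_k}$ literal vertices so that no other clause gadget reuses them, the non-clause faces between nested clauses remain tracked by the in-gadget trackers alone, and the minimum tracking set of $\G$ restricted to each variable gadget still realizes one of the two patterns identified in \cref{lem:no_other_tracking_set}. Once this is in hand, the tracker-counting argument dispatches every entry-exit pair of $F_C$ at once, so no explicit enumeration of entry-exit pairs is needed.
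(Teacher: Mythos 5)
Your proof is correct and centers on the same key observation as the paper's: the entry-exit pair $(\overline{\beta_1},\overline{\beta_2})$ is the one whose tracking status is governed by the clause's truth value. Where the paper's two-sentence argument enumerates entry-exit pairs (arguing that every pair other than $(\overline{\beta_1},\overline{\beta_2})$ is always tracked by some $\beta_k$ or $\overline{\beta_k}$), you instead count trackers on $F_C$: a satisfying assignment forces at least three trackers (either the three $\beta_k$, or the two $\overline{\beta_k}$ plus $\alpha$ or $\gamma$), making the face trivially tracked, while a falsifying assignment leaves exactly the two trackers at $\overline{\beta_1},\overline{\beta_2}$, which is precisely the untracked entry-exit pair. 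The counting route is slightly cleaner for the forward direction and avoids any enumeration of entry-exit pairs, but it is not a structurally different proof. Your preliminary discussion of the tracker/truth-value correspondence and the role of Restrictions~\ref{restriction:on_faces_2trackers} and~\ref{restriction:reserved_vertices} is sensible, though note that Restriction~\ref{restriction:on_faces_2trackers} excludes specifically $h_1$ and $l_{m_i}$ (the two corner vertices in faces needing only two trackers), not entire boundary columns; this does not affect your argument.
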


\begin{restatable}{theorem}{thmreduction}\label{thm:reduction}
$\star$ There exists a polynomial time reduction from \psat{} to \ptracking{}.
\end{restatable}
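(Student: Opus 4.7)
The plan is to show, for the reduction $\G$ described just above, that a \psat{} formula $\phi$ with $p$ variables and $q$ clauses is satisfiable if and only if $\G$ admits a tracking set of size at most $k := \sum_{i=1}^{p}(2m_i+4)$, where $m_i$ is the length of the gadget for variable $x_i$. First I would verify that the construction runs in polynomial time: each variable gadget has $O(m_i)$ vertices and $O(m_i)$ edges, each clause gadget adds only $O(1)$ new edges, and one can pick each $m_i$ to be $O(q)$ --- large enough to accommodate the clauses containing $x_i$ while still respecting Restrictions~\ref{restriction:on_faces_2trackers}--\ref{restriction:same_side_spine}. Planarity of $\G$ follows from the rectilinear planar drawing $\D$ of $\phi$ combined with the fact that each gadget is itself planar and is attached to $\D$ along its prescribed literal vertices.

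For the forward direction, given a satisfying assignment of $\phi$, I would place trackers in each variable gadget according to \cref{lem:no_other_tracking_set}, choosing the true-assignment tracking set when $x_i=1$ and the false-assignment tracking set when $x_i=0$. This uses exactly $k$ trackers. By \cref{lem:clause_satisfied_iff}, every clause-face $F_C$ is tracked. The internal faces of each variable gadget are tracked by construction, the forced trackers on the $s_i$ handle the dashed spine faces, and the remaining ``non-clause'' faces lying between nested clause gadgets are tracked thanks to the reserved literal vertices guaranteed by Restriction~\ref{restriction:reserved_vertices}. Invoking \cref{lem:tracking_set} then certifies that the chosen set is indeed a tracking set of $\G$.

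For the backward direction, suppose $\G$ has a tracking set $T$ with $|T|\le k$. Because distinct variable gadgets meet only at the $s_i$ vertices (each forced to hold a tracker via the dashed edges between $\alpha,\alpha'$ and $\beta,\beta'$), the restriction of $T$ to each gadget must be a tracking set for that gadget. By \cref{lem:no_other_tracking_set} each such restriction has size at least $2m_i+4$, and summing gives $|T|\ge k$, so equality holds everywhere and each restriction is one of the two canonical (true/false) tracking sets. Reading off the truth value of $x_i$ from this choice gives a boolean assignment, and \cref{lem:clause_satisfied_iff} then guarantees that every clause $C$ is satisfied, so $\phi$ is satisfiable.

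The main obstacle is the bookkeeping for the forward direction: I need to show that every face of $\G$ that is \emph{not} a clause face and not internal to a single variable gadget is tracked by the canonical trackers, regardless of which true/false choice is made in each variable. This requires a careful case analysis of the non-clause faces formed where a clause gadget attaches to a variable gadget on either side of the spine, leveraging Restrictions~\ref{restriction:on_faces_2trackers}--\ref{restriction:same_side_spine} together with the fact that, under either canonical assignment, two of any three consecutive literal vertices already carry trackers. A secondary subtlety is confirming that the forced spine trackers at each $s_i$ cannot be ``shared'' with an adjacent gadget in a way that would yield a strictly smaller tracking set than $k$, which is handled by the edge-disjointness of the gadgets modulo the single shared spine vertex.
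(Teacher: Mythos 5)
Your proposal has the same overall skeleton as the paper's proof: polynomiality and planarity of the construction, then a biconditional between satisfiability of $\phi$ and existence of a tracking set of a prescribed size, using \cref{lem:no_other_tracking_set} to pin down each gadget's restriction and \cref{lem:clause_satisfied_iff} to relate clause faces to satisfied clauses. However, there are two concrete gaps, one of which is substantial.

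The main gap is in the forward direction: you verify only that every \emph{face} of $\G$ is tracked, and then invoke \cref{lem:tracking_set}. But \cref{lem:tracking_set} requires that every simple \emph{cycle} be tracked, and tracking all faces does not in general imply tracking all simple cycles. The paper's proof handles this with an explicit case analysis over an arbitrary simple cycle $C$. When $C$ has no clause edges it lies in the spine, and the paper shows (building on the argument of \cref{lem:var_gadget_is_trackingset}) that any non-face cycle there has length at least 6 and hence carries at least 3 trackers because no two adjacent vertices are both untracked. When $C$ uses clause edges and is not itself a clause face, the paper decomposes $C$ into alternating \emph{spine paths} and \emph{clause paths}, observes that each spine path contributes at least one tracker, and treats the two remaining sub-cases (two clause paths on opposite sides of the spine, versus nested clause paths on the same side) to show $C$ collects at least 3 trackers — here \hyperref[restriction:reserved_vertices]{Restriction~2} is used essentially. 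This cycle-level analysis is where most of the real work in the paper's proof lives, and your proposal does not attempt it; your ``main obstacle'' paragraph still only talks about faces.

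The second, smaller gap is the threshold: you set $k = \sum_{i=1}^{p}(2m_i+4)$, but the construction also forces trackers onto the $p-1$ shared spine vertices $s_2 = t_1, \ldots, s_p = t_{p-1}$ via the dashed edges between consecutive gadgets, and these are not included in any gadget's $2m_i+4$ budget (which is computed for the standalone gadget with source $s_i$ and destination $t_i$). The correct bound is $\left(\sum_{i=1}^{p}(2m_i+4)\right) + p - 1$. With your smaller $k$, the backward direction's counting argument breaks: every feasible tracking set would already exceed $k$, so the hypothesis $|T|\le k$ could never be met.
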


\begin{corollary}
  \ptracking{} is NP-hard.
\end{corollary}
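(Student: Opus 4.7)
The plan is to derive the corollary as an immediate consequence of \cref{thm:reduction}, chained with Lichtenstein's classical result that \psat{} is NP-complete. Since a polynomial-time many-one reduction from an NP-hard problem to \ptracking{} suffices to conclude NP-hardness of \ptracking{}, essentially no new work is needed beyond correctly packaging the theorem statement.

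Concretely, I would first recast \ptracking{} in its standard decision form: given $(G,s,t)$ together with an integer $k$, decide whether there is a tracking set of size at most $k$. The reduction from \cref{thm:reduction} produces, in polynomial time, a planar graph $\G$ from an input formula $\phi$ with rectilinear drawing $\D$; I would pair $\G$ with the threshold $k^\star = \sum_{i=1}^{p}(2m_i + 4)$ obtained by summing the canonical variable-gadget sizes described after \cref{fig:var_gadget} (minus the double counting introduced by identifying $t_i=s_{i+1}$ along the spine). This threshold is computable in polynomial time since each $m_i$ is chosen as a polynomial in $|\phi|$. Then the correctness of the reduction is witnessed by \cref{lem:no_other_tracking_set}, which forces any tracking set to agree with a true/false assignment on each variable gadget, together with \cref{lem:clause_satisfied_iff}, which guarantees that a clause face $F_C$ is tracked if and only if the induced assignment satisfies $C$. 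Hence $\phi \in \psat{}$ if and only if $(\G,s,t,k^\star)$ is a yes-instance of \ptracking{}.

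Combining this equivalence with Lichtenstein's NP-completeness of \psat{} yields NP-hardness of \ptracking{} by transitivity of polynomial-time reductions. The main potential pitfall, and the only step requiring any care, is verifying that the threshold $k^\star$ is exactly tight for satisfiable formulas: one direction is given by exhibiting the explicit assignment-induced tracking set, and the other direction follows because \cref{lem:no_other_tracking_set} leaves no slack within variable gadgets, so any savings would have to come from clause faces, which \cref{lem:clause_satisfied_iff} rules out for unsatisfiable instances. No further obstacles arise, so the proof reduces to a single sentence citing \cref{thm:reduction} and Lichtenstein.
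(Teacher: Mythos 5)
Your high-level plan matches the paper's own derivation exactly: the corollary follows immediately from \cref{thm:reduction} combined with Lichtenstein's NP-completeness of \psat{}, and the paper offers no further proof beyond that chain, so the logical structure is sound and the approach is the same.

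That said, the specific threshold you propose is wrong, and the explanation you give for it is wrong in direction. The paper's proof of \cref{thm:reduction} uses $T=\left(\sum_{i=1}^p 2m_i + 4\right) + p-1$, not $k^\star = \sum_{i=1}^{p}(2m_i + 4)$. You gloss the discrepancy as ``minus the double counting introduced by identifying $t_i=s_{i+1}$ along the spine,'' but the correction goes the other way: the per-gadget count $2m_i+4$ does \emph{not} include $s_i$ or $t_i$ (it counts $R=\{\alpha,\alpha',\beta,\beta',\mu_1,\mu_{m_i}\}$, the interior $\mu_k$, and the $m_i$ literal trackers of the chosen assignment), and the construction then \emph{forces} additional trackers on each shared vertex $t_i=s_{i+1}$, $i=1,\dots,p-1$, via the dashed edges between $\alpha'/\beta'$ of one gadget and $\alpha/\beta$ of the next. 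Those $p-1$ forced trackers are extra, not double-counted, so they must be added. Because the corollary only needs that the reduction of \cref{thm:reduction} exists and is polynomial, your arithmetic slip does not endanger the corollary itself; but if you tried to reprove \cref{thm:reduction} with your $k^\star$, the ``$\Leftarrow$'' direction would fail, since no tracking set of size $k^\star$ exists even for satisfiable $\phi$.
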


It remains to show that \ptracking{} is in NP; Banik et. al. \cite{banik2018polynomial} prove this in the more general case of \tracking{}.

\section{Bounded clique-width graphs}\label{sec:courcelle}

We show that \tracking{} can be solved in linear time when the input graph has bounded clique-width, by applying Courcelle's theorem \cite{courcelle1990monadic,courcelle2012graph,courcelle2000linear}, a powerful meta-theorem that establishes fixed-parameter tractability of \emph{any} graph property that is expressible in monadic second order logic.

\emph{Clique-width}, first introduced by Courcelle et. al. \cite{courcelle1993handle} and revisited by Courcelle and Olariu \cite{courcelle2000upper}, is an important graph parameter that, intuitively, measures the closeness of a graph to a cograph -- a graph with no induced 4-vertex paths. It is closely related to \emph{tree-width}, another influential graph parameter that measures closeness of a graph to a tree and that was first introduced by Bertelé and Brioschi \cite{bertele1972nonserial} and later rediscovered by Halin \cite{halin1976s} and Robertson and Seymour \cite{robertson1986graph}. While both parameters are determined based on specific hierarchical decompositions of a graph, the clique-width is strictly more powerful in the sense that the class of graphs of bounded clique-width includes all graphs of bounded tree-width, but not vice-versa. Details on the relationship between these parameters can be found in Courcelle and Engelfriet \cite{courcelle2012graph}. Graphs of bounded clique-width include series-parallel graphs, outerplanar graphs, pseudoforests, cographs, distance-hereditary graphs, etc.

\subparagraph*{MSO$_1$ vs MSO$_2$.}

Second order logic extends first order logic, by allowing quantification over relations (of any fixed arity) on the elements of the domain of discourse. Monadic second order logic itself only allows quantification over unary relations (subsets of the domain of discourse) and, in the logic of graphs, it comes in two flavors: MSO$_1$ and MSO$_2$. The only distinction between these is that the latter allows edges to be elements of the domain of discourse (and thus be quantified over), while the former does not. Besides the quantifiers ($\forall$ and $\exists$) and the standard logic operations $\neg,\wedge,\vee,\rightarrow$, both logics include predicates for equality $(=)$ and relation membership $(\in)$. In addition, MSO$_1$ includes a predicate $(\sim)$ that determines vertex adjacency and MSO$_2$ includes a predicate for vertex-edge incidence. MSO$_2$ is more expressive, for example: Hamiltonicity can be expressed using MSO$_2$, but not using MSO$_1$. Details on the distinction between the two logics can be found in \cite{courcelle2012graph}.

\subparagraph*{Courcelle's theorem.}

Courcelle et. al. \cite{courcelle1990monadic,courcelle2000linear} showed that any graph property expressed in MSO$_1$ and MSO$_2$ is FPT under clique-width and tree-width (respectively). More specifically, they showed that any MSO$_1$-, MSO$_2$-expressible property can be tested in $f(k,l)\cdot n$ and $g(k',l')\cdot (n + m)$ time (respectively) for graphs of clique-width $k$ and tree-width $k'$, where: $f$ and $g$ are computable functions, $l$ and $l'$ are the lengths of the logic formulas, $n$ is the number of vertices and $m$ is the number of edges. The result for MSO$_2$ is valid in optimization problems with linear evaluation functions \cite{courcelle1993monadic}. Later, Courcelle, Makowsky and Rotics \cite{courcelle2000linear} extended these results for MSO$_1$. Examples of constructing FPT graph algorithms parameterized by clique-width or tree-width, which are based on automata, are given in \cite{DBLP:journals/japll/CourcelleD12,DBLP:journals/tcs/CourcelleD16}. While it is possible to construct tree decompositions of width $k'$ in linear time \cite{bodlaender1996linear}, there is no FPT algorithm for finding clique decompositions of clique-width $k>3$. Fortunately, it is possible to construct a clique decomposition of width exponential in $k$ in cubic time \cite{oum2008approximating}. In this section, we will take advantage of the latter. 

We give an alternative definition for tracking set that is easier to express using the logic of
graphs.

\begin{lemma}[Tracking set]\label{lem:tracking_set_2}
For an undirected graph $G=(V,E)$, a subset $T \subseteq V$ is a tracking set if and only if there is \emph{no} $s-t$ path $P_{st}=P_{ss'}\cup P_{s't'}\cup P_{t't}$ for $s',t' \in V$ and corresponding $s-s'$, $s'-t'$ and $t'-t$ paths, such that:
\begin{enumerate}
  \item There exists an alternative $s'-t'$ path $P'_{s't'} \neq P_{s't'}$ and
  \item $T \cap (P_{s't'} \cup P'_{s't'}) \subseteq \{s',t'\}$.
\end{enumerate}
\end{lemma}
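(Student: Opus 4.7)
My plan is to prove \cref{lem:tracking_set_2} by recognizing that a ``bad configuration'' is essentially an encoding of a pair of distinct simple $s$-$t$ paths with the same tracker sequence --- the precise failure mode of the tracking-set definition. Concretely, the paths $P_{st} := P_{ss'} \cup P_{s't'} \cup P_{t't}$ and $Q := P_{ss'} \cup P'_{s't'} \cup P_{t't}$ will play the two roles in the forward direction, while conversely two distinct $s$-$t$ paths with equal tracker sequences will be shown to admit such a decomposition in the backward direction.

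For the forward direction, the bad configuration exhibits $P_{st}$ as a simple $s$-$t$ path via the given concatenation, and implicitly $Q$ as a second simple $s$-$t$ path obtained by swapping the middle segment for $P'_{s't'}$. Distinctness of $P_{st}$ and $Q$ follows from $P'_{s't'} \neq P_{s't'}$, and the two share the prefix $P_{ss'}$ and suffix $P_{t't}$ vertex by vertex. By condition~(2), the two middle segments contribute no trackers beyond $\{s',t'\}$, which are already counted as shared endpoints of the prefix and suffix, so both tracker sequences equal the trackers of $P_{ss'}$ (in order), possibly $s'$, possibly $t'$, and the trackers of $P_{t't}$ (in order). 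Hence $T$ fails the definition of a tracking set.

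For the backward direction I would argue the contrapositive: assume $T$ is not a tracking set, so there exist distinct simple $s$-$t$ paths $Q_1, Q_2$ with $\mathcal{S}_{Q_1}^T = \mathcal{S}_{Q_2}^T$. Let $s'$ be the last vertex up to which $Q_1$ and $Q_2$ agree on a common prefix from $s$, and let $t'$ be the first vertex reached on $Q_1$ after $s'$ that also lies on $Q_2$ after $s'$. By the maximality of $s'$, the minimality of $t'$, and the simplicity of both paths, the subpaths $Q_1[s',t']$ and $Q_2[s',t']$ are internally vertex-disjoint. Setting $P_{ss'} := Q_1[s,s']$, $P_{s't'} := Q_1[s',t']$, $P_{t't} := Q_1[t',t]$, and $P'_{s't'} := Q_2[s',t']$ then yields a decomposition of $P_{st} := Q_1$ with $P'_{s't'} \neq P_{s't'}$ as required.

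The main technical point, and the step I expect to require the most care, is verifying condition~(2): that no tracker lies in the interior of $Q_1[s',t']$ or $Q_2[s',t']$. I would argue this by a positional analysis of the tracker sequences. Any tracker in the interior of $Q_1[s',t']$ appears in $\mathcal{S}_{Q_1}^T$ strictly between the common-prefix trackers and the trackers at or after $t'$; by internal disjointness this tracker cannot appear in the corresponding position of $\mathcal{S}_{Q_2}^T$, contradicting $\mathcal{S}_{Q_1}^T = \mathcal{S}_{Q_2}^T$, and the symmetric argument handles interior trackers of $Q_2[s',t']$. The subtlety is that a vertex in the interior of $Q_2[s',t']$ could reappear on $Q_1$ after $t'$, so the ``corresponding position'' argument requires careful accounting of where each tracker contributes to the sequence; getting this bookkeeping right --- possibly by iteratively refining the choice of $s'$ and $t'$ to swallow any stray coincidences --- is where the bulk of the technical effort lies.
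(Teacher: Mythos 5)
The paper's own proof is a one-liner: it observes that \cref{lem:tracking_set_2} is just \cref{lem:tracking_set} (the cycle characterization, due to Banik et al.) with the simple cycle $C$ written as $P_{s't'}\cup P'_{s't'}$ and the entry-exit-pair conditions unfolded into the decomposition of $P_{st}$. You instead try to prove the equivalence from scratch, directly from the definition of tracking set as "two distinct $s$-$t$ paths with the same tracker sequence." That is a genuinely different route, but it needs care that the proposal does not fully supply, and the part you defer is essentially a re-derivation of \cref{lem:tracking_set}.

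Two concrete gaps. First, in the forward direction you declare $Q = P_{ss'}\cup P'_{s't'}\cup P_{t't}$ a ``second simple $s$-$t$ path,'' but nothing in the stated hypotheses forbids $P'_{s't'}$ from passing through an interior vertex of $P_{ss'}$ or $P_{t't}$, in which case $Q$ is a walk, not a simple path, and the argument does not go through. (When read as a translation of \cref{lem:tracking_set}, $P_{s't'}$ and $P'_{s't'}$ are implicitly the two arcs of a simple cycle that $P_{ss'}$ and $P_{t't}$ hit only at $s'$ and $t'$, which supplies exactly the disjointness you are silently assuming.) Second, in the backward direction, your specific choice ``$s'=$ last common-prefix vertex, $t'=$ first vertex of $Q_1$ after $s'$ that reappears on $Q_2$ after $s'$'' does \emph{not} in general make $Q_1(s',t')$ and $Q_2(s',t')$ tracker-free. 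Take $Q_1 = (s,a,p,b,v,q,t)$, $Q_2 = (s,a,r,v,u,b,t)$, and $T=\{v\}$: then $\mathcal{S}_{Q_1}^T = \mathcal{S}_{Q_2}^T = (v)$, your rule gives $s'=a$ and $t'=b$, yet $v\in T$ lies in the interior of $Q_2[a,b]$, so condition~(2) fails. You flag that ``iterative refinement'' may be needed, but that refinement is precisely the nontrivial content, and leaving it unresolved leaves the proof incomplete. The simplest repair is the paper's: invoke \cref{lem:tracking_set}, take an untracked simple cycle $C$ with untracked entry-exit pair $(s',t')$, set $P_{s't'},P'_{s't'}$ to the two arcs of $C$ and $P_{ss'},P_{t't}$ to the entry/exit paths from \cref{def:entry_exit}; all required disjointness is then built into the definition of entry-exit pair.
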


\begin{proof}
  This follows directly from \cref{lem:tracking_set}.
\end{proof}

In the logic formulas presented below, we use lowercase letters to quantify over vertices and uppercase letters to quantify over sets of vertices. We use $s$ and $t$ as free variables and, for convenience, we also use set intersection $(\cap)$, union $(\cup)$ and containment $(\subseteq)$, without explicitly expressing these operations using MSO$_1$.

\begin{mdframed}
\begin{tabularx}{\textwidth}{lll}
\multicolumn{3}{l}{$\textsc{IsTrackingSet}(T,s,t)$} \\
$\iff \nexists P,Q,R \ [\ \exists s',t'\ [$ & \multicolumn{2}{l}{$\textsc{HasPath}(P,s,s')\ \wedge\ \textsc{HasPath}(Q,s',t')\ \wedge\ \textsc{HasPath}(R,t',t)$}\\
& \multicolumn{2}{l}{$\wedge\ P \cap Q = \{s'\}\ \wedge\ Q \cap R = \{t'\}\ \wedge\ P \cap R = \emptyset$}\\
& $\wedge\ \exists Q' \neq Q$ & $[\textsc{HasPath}(Q', s', t')\ \wedge\ T \cap (Q\cup Q') \subseteq \{s',t'\}\ ]]]$\label{eq:mso:t}
\end{tabularx}
\end{mdframed}

The first two lines of the above equivalence establish that $P$, $Q$ and $R$ form an $s-t$ path. The last line restricts $s'$ and $t'$ to be an entry-exit pair with respect to the cycle $Q \cup Q'$ (see \cref{fig:mso}) and, in addition, establishes that the cycle $Q \cup Q'$ is not tracked.

\begin{figure}
\centering
\includegraphics[width=190pt]{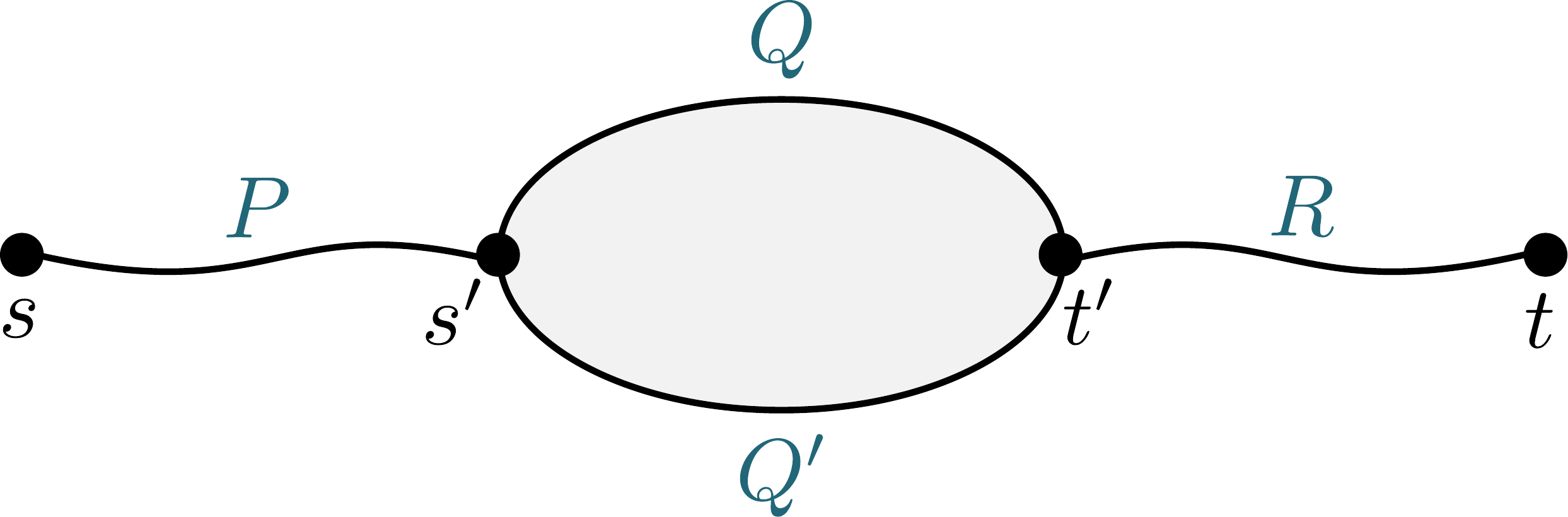}
\caption{Illustration of variable sets $P$, $Q$, $Q'$ and $R$ as well as vertex variables $s$, $s'$, $t'$
and $t$ used in expressing \textsc{IsTrackingSet} using MSO$_1$.}
\label{fig:mso}
\end{figure}

The primitive $\textsc{HasPath}(X, a, b)$, whose input consists of a set $X \subseteq V$ and vertices $a,b \in V$, verifies the existence of a simple path between $a$ and $b$ that only uses vertices in $X$. We define it as follows:

\begin{mdframed}
\begin{tabularx}{\textwidth}{l@{}l}
  $\textsc{HasPath}(X,a,b)$ \\
  $\iff \nexists X_1,X_2 \subseteq X \ [\ $ & $X_1 \cup X_2 = X\ \wedge\ a \in X_1\ \wedge\  b \in X_2\ \wedge\ \neg \left( \exists u\in X_1\ \wedge\ \exists v \in X_2 \ [ u \sim v\ ] \right)]$
\end{tabularx}
\end{mdframed}

\begin{remark}\label{rem:has_path}
\textsc{HasPath}$(X,a,b)$ is correctly expressed under MSO$_1$ and it correctly verifies that there exists an $a-b$ path using only vertices in $X$.
\end{remark}

\begin{remark}\label{rem:is_tracking_set}
\textsc{IsTrackingSet} is correctly expressed under MSO$_1$ and it correctly verifies that the given subset of vertices is a tracking set.
\end{remark}

\begin{theorem}
\tracking{}$(G,s,t)$ can be solved in polynomial time if $G$ has bounded clique-width. Moreover, if a clique decomposition of bounded width is given, it can be solved in linear time.
\end{theorem}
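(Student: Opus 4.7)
The plan is to invoke Courcelle's theorem with the MSO$_1$ formula $\textsc{IsTrackingSet}$ that is already at hand. By \cref{rem:is_tracking_set}, this formula correctly decides whether a vertex set $T$ is a tracking set of $G$, treating $s$ and $t$ as free vertex variables and $T$ as a free set variable. All of the graph-theoretic work is therefore behind us, and what remains is to package the formula into a model-checking or optimization call.

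The first step is to view \tracking{} as an MSO$_1$ optimization problem: compute $\min\{|T| : \textsc{IsTrackingSet}(T,s,t)\}$. Since $|T|$ is a linear evaluation function on a free set variable, the optimization extension of Courcelle's theorem for MSO$_1$ (attributed in the preamble of this section to Courcelle, Makowsky and Rotics) applies, and it produces an algorithm running in time $f(k,l)\cdot n$, where $k$ bounds the clique-width of $G$ and $l$ is the (constant) length of our formula. This immediately yields linear-time solvability \emph{given} a clique decomposition of bounded width, which handles the second half of the theorem statement.

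The second step is to remove the assumption that the decomposition is supplied. For this I would invoke Oum's cubic-time algorithm, which was highlighted earlier in the section: it produces, in $O(n^3)$ time, a clique decomposition of width bounded by an exponential function of the true clique-width $k$. Feeding this decomposition into the MSO$_1$ optimization routine above still runs in linear time in $n$ (with constants depending only on $k$ and $l$), so the overall cost is dominated by the $O(n^3)$ preprocessing. Thus \tracking{} is solvable in polynomial (indeed cubic) time whenever the clique-width is bounded, completing the first half of the theorem statement.

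I do not anticipate a substantive obstacle, since the hard modelling step has already been carried out in \cref{rem:has_path,rem:is_tracking_set}. The single point requiring care is relying on the \emph{optimization} form of Courcelle's theorem for MSO$_1$ rather than just its decision form; this is precisely the Courcelle--Makowsky--Rotics extension cited above. As a fallback one could instead run the decision version of the algorithm for each candidate size $k=0,1,\dots,n$, or binary search on $k$ after augmenting the formula with an existentially quantified cardinality witness, but this approach costs an extra factor of $n$ (or $\log n$) and would not achieve the linear bound promised in the ``moreover'' clause.
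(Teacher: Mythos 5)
Your proposal matches the paper's proof: both rely on the MSO$_1$ formula $\textsc{IsTrackingSet}$, the Courcelle--Makowsky--Rotics linear-time optimization theorem for bounded clique-width given a decomposition, and Oum's cubic-time approximation algorithm to obtain a decomposition when one is not supplied. You spell out the reasoning in more detail, but the route is the same.
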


\begin{proof}
This follows directly from \cref{lem:tracking_set_2}, \cref{rem:has_path,rem:is_tracking_set}, and the linear time algorithm given by Courcelle \cite{courcelle2000linear} for any optimization problem on graphs of bounded clique-width, whose decomposition is given in advance.
\end{proof}

\section{Conclusion and open questions}\label{sec:conclusion}

We showed that \ptracking{} is NP-complete and we give a 4-approximation algorithm. We also show that, for graphs of bounded-clique width, \tracking{} can be solved in linear time by applying Courcelle's theorem, as long as its clique decomposition is given in advance. A natural direction of future study would be to improve the approximation ratio of \ptracking{} or establish constant approximation factors for graphs of larger genus or, more generally, for \tracking{}. Another open question is to establish the difficulty of \tracking{} on directed graphs: on one side planar directed acyclic graphs are not known to be NP-hard and, on the other side, it is not known whether its general or planar versions are in NP. Finally, it would be interesting to find efficient algorithms for graphs of bounded tree-width or clique-width without resorting to the finite automaton approach used in Courcelle's theorem.

\bibliographystyle{plainurl}
\bibliography{tracking_planar}

\pagebreak

\appendix

\begin{figure}
\centering
\begin{subfigure}[b]{0.3\textwidth}\label{subfig:add_faceA}
  \includegraphics[height=100pt]{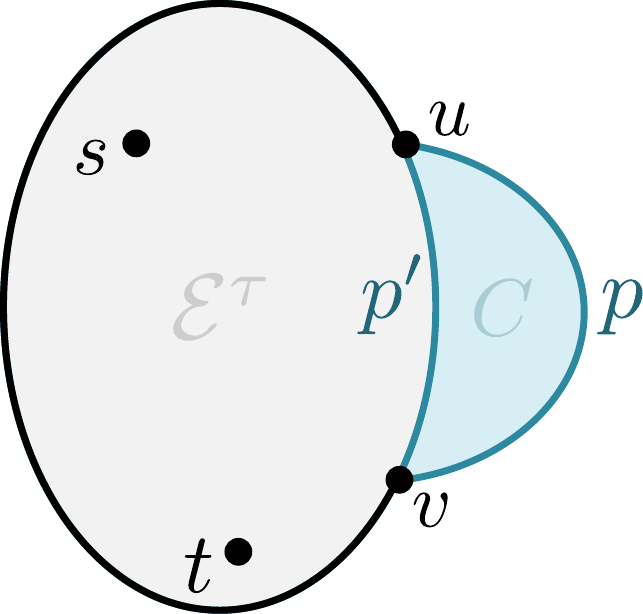}
  \caption{Valid}
\end{subfigure}
~
\begin{subfigure}[b]{0.3\textwidth}\label{subfig:add_faceB}
  \includegraphics[height=100pt]{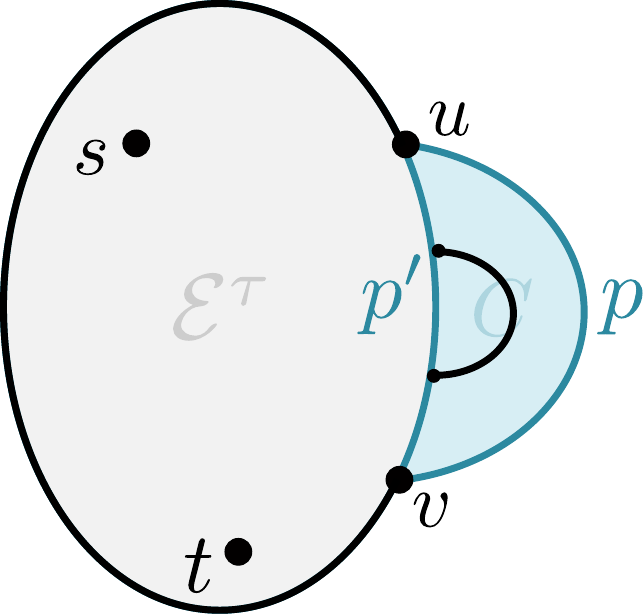}
  \caption{Invalid}
\end{subfigure}
~
\begin{subfigure}[b]{0.3\textwidth}\label{subfig:add_faceC}
  \includegraphics[height=100pt]{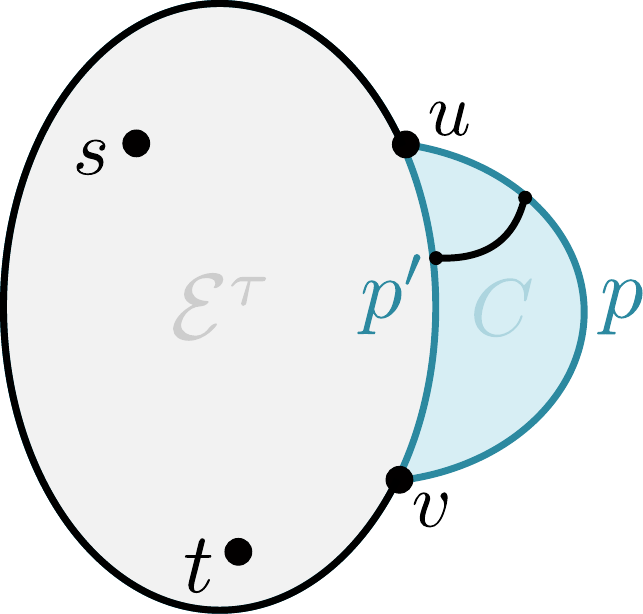}
  \caption{Invalid}
\end{subfigure}
\caption{Illustration of valid and invalid choices of vertices $u$ and $v$ in the proof of \cref
{lem:OPT_bound}. Cycle $C$ must correspond to a face in the fixed embedding $\E$.}\label{fig:add_face}
\end{figure}

\section{Deferred proofs on the approximation algorithm}\label{sec:missing_proofs_approx}

\begin{figure}
\centering
\begin{subfigure}[c]{.45\textwidth}
  \begin{subfigure}[t]{\textwidth}
  \centering
    \includegraphics[width=178pt]{images/reduction2.pdf}
  \caption{Illustration of \hyperref[reduction:2]{Reduction~2}, where $deg(u) = deg(v) = 2$.}
  \label{fig:reduction2_copy}
  \end{subfigure}

  \vspace{1em}

  \begin{subfigure}[t]{\textwidth}
  \centering
    \includegraphics[width=130pt]{images/reduction3.pdf}
  \caption{Illustration of \hyperref[reduction:3]{Reduction~3}, where $deg(s')\ge 3$, $deg(t')\ge 3$ and $deg(v)=2$.}
  \label{fig:reduction3_copy}
  \end{subfigure}
\end{subfigure}
\hspace{2em}
\begin{subfigure}[c]{.45\textwidth}
\centering
\includegraphics[height=100pt]{images/reduction4.pdf}
\caption{Illustration of \hyperref[reduction:4]{Reduction~4}, where $deg(s')\ge 3$, $deg(t')\ge 3$ and $deg(u)=deg(v)=2$.}
\label{fig:reduction4_copy}
\end{subfigure}
\caption{Illustration of \hyperref[reduction:2]{Reductions~2}, \hyperref[reduction:3]{3} and \hyperref[reduction:4]{4}.}
\end{figure}

\FloatBarrier

\lemrone*
\begin{proof}
See proof in \cite{banik2018polynomial}.
\end{proof}

\lemrtwo*
\begin{proof}
Banik et. al. \cite[Lemma 16]{banik2018polynomial} showed that having three adjacent vertices of degree 2 is redundant. Here, we extend their proof and show that we can also get rid of the second adjacent vertex of degree 2. Let $u$ and $v$ be the two adjacent vertices of degree 2, and let $u'$ be the other neighbor of $u$.
Notice that $v,u'$ are not adjacent, otherwise the $u,v$ edge would then not belong to an $s-t$ path and thus be removed by \hyperref[reduction:1]{Reduction~1}. Hence, no parallel edges are added to the graph after this reduction. We argue that this reduction is safe by showing that there exists a minimum tracking set that does not include $u$ and $v$ simultaneously. Take any minimum tracking set $T$ that includes $u$ and $v$. We can always move the tracker from $u$ to $u'$; this remains a tracking set, because $u$ immediately follows or precedes $u'$ on any $s-t$ path.
This can only decrease $T$'s cardinality. This reduction can be done in $O(poly(n))$ time, by repeatedly checking for the existence of adjacent vertices of degree 2.
\end{proof}

\lemrthree*
\begin{proof}
Let $v \notin \{s,t\}$ be the vertex of degree 2 in the triangle $\Delta vs't'$ (see \cref{fig:reduction3_copy}). Then, it must be the case that $s'$ and $t'$ form an entry-exit pair. This follows from (i) the fact that there exists an entry-exit pair in $\Delta vs't'$ (by \cref{lem:cycle_has_entry_exit}) and (ii) the fact that $v$ cannot be in an entry-exit pair (by \cref{clm:deg2_cant_entry_exit}). Then, by \cref{lem:tracking_set}, any feasible solution must place a tracker on $v$. Therefore, $v$ and its edges can be removed, since $v$ is neither a cut-vertex, nor in any entry-exit pair, so that its removal does not eliminate an untracked cycle. Clearly, this reduction can be done in $O(poly(n))$ time.
\end{proof}

\lemrfour*
\begin{proof}
Let $u,v \notin \{s,t\}$ be the two vertices of degree 2 that connect the same pair of vertices $s'$ and $t'$. Similarly to the proof of \cref{lem:reduction3}, $s'$ and $t'$ must form an entry-exit pair. So, by \cref{lem:tracking_set}, any feasible solution must place a tracker in either $u$ or $v$. By symmetry, we can track and remove $u$ and its edges. Therefore, \hyperref[reduction:4]{Reduction~4} is safe by \cref{clm:deg2_cant_entry_exit} and the facts that $u$ is neither a cut-vertex nor in an entry-exit pair.
\end{proof}

\section{Deferred proofs on NP-completeness}\label{sec:missing_proofs_hardness}

To prove \cref{lem:no_other_tracking_set}, we first prove a couple of helpful lemmas and make a few observations:

\begin{itemize}
  \item Each vertex in $R$ belongs to a triangle, such that the other two vertices form an entry-exit pair, so must be tracked.
  \item Each square face, besides the two including $h_2$ and $l_{m_i-1}$, requires $3$ tracked vertices. 
  \item Two adjacent vertices cannot both be untracked.
\end{itemize}

\begin{lemma}\label{lem:var_gadget_is_trackingset}
The true/false assignments of $x_i$ correspond to tracking sets with respect to $x_i$'s gadget with source $s_i$ and destination $t_i$.
\end{lemma}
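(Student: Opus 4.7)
My plan is to invoke \cref{lem:tracking_set}, reducing the task to showing that every simple cycle of $x_i$'s gadget is tracked under the stated assignment. By the top-bottom symmetry of the gadget (exchanging the $h_k$-rail with the $l_k$-rail and shifting tracker parity accordingly), the argument for the true assignment carries over verbatim to the false assignment, so I would only treat the true case $T = R \cup \{\mu_k : 1 \le k \le m_i\} \cup \{h_k : k \text{ even}\} \cup \{l_k : k \text{ odd}\}$.

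I would first enumerate the simple cycles of the gadget. By its ladder-like structure, every simple cycle can be written as a symmetric difference of facial cycles, which come in three flavors: triangles incident to a vertex of $R$ at either end of the chain, triangles incident to some $\mu_k$ inside the gadget body, and the 4-faces lying between consecutive columns (with the two 4-faces including $h_2$ and $l_{m_i-1}$ having a slightly different boundary because of the attachments at $s_i$ and $t_i$).

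Next, I would verify that each facial cycle is tracked. Every triangle contains a $\mu_k$ or an $R$-vertex; that vertex lies in $T$, and by inspection of the triangle's boundary, it is never forced to coincide with both endpoints of any entry-exit pair, so the cycle is tracked. Each generic 4-face contains the pair $\mu_k, \mu_{k+1}$ (both in $T$) together with at least one tracked literal vertex by the alternating-parity rule, giving three trackers and hence trivial tracking. For the two exceptional 4-faces near the ends, I would enumerate the possible entry-exit pairs -- the limited adjacencies of $h_1, l_1, h_{m_i}, l_{m_i}$ and the surrounding $R$-vertices restrict which pairs can arise -- and check that $T$ always contains a tracker outside each such pair.

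For composite simple cycles, taking a symmetric difference of face cycles removes only vertices on shared edges; since every shared edge in this gadget is incident to at least one surviving $\mu_k$ or $R$-vertex on the composite boundary, tracking is inherited from the facial case. The main obstacle I expect is the bookkeeping for the exceptional end 4-faces, since their entry-exit structure interacts with the attachments at $s_i$ and $t_i$; once these boundary cases are laid out explicitly, the rest is a direct verification.
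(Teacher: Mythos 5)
Your plan for the facial cycles tracks the paper's proof fairly closely, but the handling of composite (non-face) simple cycles has a genuine gap. You assert that ``taking a symmetric difference of face cycles removes only vertices on shared edges\ldots\ tracking is inherited from the facial case.'' Tracking is not a property that composes under symmetric differences: a cycle $C$ is tracked if and only if \emph{every entry-exit pair with respect to $C$} has a tracker on $C \setminus \{s',t'\}$, and the entry-exit pairs of the composite cycle are in general completely different from those of its constituent faces. Knowing each face is tracked tells you nothing direct about the composite cycle's entry-exit pairs, so ``inheritance'' here is unsupported and this step would not survive scrutiny.

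The paper's proof sidesteps the issue with a simple counting argument: it first observes that in a true/false assignment no two adjacent vertices of the gadget are both untracked. Hence any cycle that is not a face must have length at least $6$, and therefore contains at least $3$ trackers, which makes it \emph{trivially tracked} (no entry-exit pair can swallow all three). This avoids any reasoning about how entry-exit pairs behave under cycle composition. If you want to keep your symmetric-difference framing, you would need to replace the ``inheritance'' claim with a quantitative argument of exactly this kind (lower bound the number of trackers on the composite boundary via the no-two-adjacent-untracked property), at which point the symmetric-difference decomposition is doing no real work. Your top--bottom symmetry reduction to the true case is a valid and pleasant simplification not present in the paper, and your facial case analysis (triangles, generic $4$-faces, the two exceptional end $4$-faces) mirrors the paper's.
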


\begin{proof}
In a true/false assignment of $x_i$, every face of the gadget is trivially tracked, except the faces including $s_i/t_i$ (which are clearly tracked) and the faces including $h_2$ and $l_{m_i-1}$. Though the latter faces may only contain 2 trackers (depending on $x_i$'s truth value and/or $m_i$'s parity), they are nevertheless tracked because one of these trackers cannot be in an entry-exit pair. The remaining cycles, i.e. the ones which are not faces, are also trivially tracked: since these cycles must have size at least 6, they must contain 3 trackers given the observation that no adjacent vertices are untracked.
\end{proof}

\begin{lemma}\label{lem:3_trackers_column}
In a minimum tracking set, each column has exactly 2 trackers.
\end{lemma}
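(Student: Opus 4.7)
Let $T^*$ denote any minimum tracking set of the variable gadget (with source $s_i$ and destination $t_i$), and write $T_k := T^* \cap \{h_k, \mu_k, l_k\}$ for its restriction to column $k$. My plan is to pin $|T_k| = 2$ for every $k$ by producing matching upper and lower bounds on $\sum_k |T_k|$, and then arguing that equality propagates to every column.

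For the upper bound, \cref{lem:var_gadget_is_trackingset} already exhibits a tracking set of cardinality $2m_i + 4$ (coming from the true assignment), so $|T^*| \le 2m_i + 4$. By the first of the listed observations, every vertex of $R = \{\alpha,\alpha',\beta,\beta',\mu_1,\mu_{m_i}\}$ lies in $T^*$; in particular the $4$ vertices $\alpha,\alpha',\beta,\beta'$ lie outside every column. Subtracting them gives $\sum_{k=1}^{m_i} |T_k| \le 2m_i$.

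For the matching lower bound, I would show $|T_k| \ge 2$ for every $k$, whence $\sum_k |T_k| \ge 2m_i$ and equality forces $|T_k| = 2$ throughout. Columns $1$ and $m_i$ already contain the forced trackers $\mu_1,\mu_{m_i}\in R$; a second tracker in each is produced by applying the second observation to the (unique) non-exempt square face straddling columns $(1,2)$ (resp.\ $(m_i-1,m_i)$) together with the third observation, which forbids two adjacent untracked vertices. For an internal column $k$, I apply the second observation to the pair of square faces between columns $k$ and $k+1$: each square has four vertices split evenly between the two columns and requires $3$ tracked vertices, so summing both constraints and accounting for the overlap at $\mu_k,\mu_{k+1}$ yields $|T_k| + |T_{k+1}| \ge 4$. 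Chained with the global cap $\sum_k |T_k| \le 2m_i$ and the trivial per-column cap $|T_k|\le 3$, this squeezes every $|T_k|$ to exactly $2$.

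The main obstacle is the boundary bookkeeping: near columns $1,2,m_i-1,m_i$, exactly one of the two squares between consecutive columns is exempted and contributes no direct $3$-tracker constraint, so the naive pairwise lower bound degrades from $4$ to roughly $3$ and threatens the sum $\sum_k |T_k| \ge 2m_i$ by a single unit. I would close this gap by leveraging the third observation together with the already-forced trackers $\alpha,\alpha',\beta,\beta' \in R$, which dominate the edges entering columns $1$ and $m_i$ and propagate a second tracker into those columns without inflating the total beyond $2m_i$; this forces the global count to be exactly $2m_i$ and hence $|T_k| = 2$ for every $k$, as claimed.
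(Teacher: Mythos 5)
Your proposal shares the same starting point as the paper — the cap $\sum_k |T_k|\le 2m_i$ coming from the true/false assignment and the three structural observations — but the route you take to the per-column lower bound has a genuine logical gap that the paper avoids.

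The central step you rely on, ``$|T_k|+|T_{k+1}|\ge 4$ chained with $\sum_k|T_k|\le 2m_i$ and $|T_k|\le 3$ squeezes every $|T_k|$ to exactly $2$,'' is false as stated: the alternating sequence $3,1,3,1,\dots,3,1$ (say $m_i$ even) satisfies every pairwise constraint, has all $|T_k|\le 3$, and sums to exactly $2m_i$, yet is not constant. To rule this out you must additionally establish $|T_1|,|T_{m_i}|\ge 2$ \emph{before} the squeeze, and even then the squeeze is not immediate — it requires a telescoping/monotonicity argument on the partial sums that you never carry out. Your logic is effectively circular: you offer the pairwise bound plus squeeze as the way to get $|T_k|\ge 2$ for interior $k$, while the squeeze only works once you already know $|T_k|\ge 2$ at the endpoints (and really everywhere, which is the conclusion). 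Moreover your endpoint argument is not airtight: with $T_1=\{\mu_1\}$ the single non-exempt boundary square $(l_1,\mu_1,\mu_2,l_2)$ can be covered by tracking $\mu_2,l_2$, and the adjacency $h_1\sim h_2$ is discharged by tracking $h_2$, so nothing you cite forces a second tracker into column $1$ locally. You flag a ``boundary bookkeeping'' obstacle, but you frame it as a one-unit slack in the sum rather than as the collapse of the squeeze itself, and the proposed fix (``leveraging the third observation together with the forced trackers $\alpha,\alpha',\beta,\beta'$'') is asserted without a concrete derivation. The paper's proof sidesteps all of this: it assumes some interior column has only one tracker (which must be $\mu_k$ by the adjacency observation), notes its neighbors must be full, argues that to keep the average at most $2$ the pattern must alternate $\dots,3,1,3,1,\dots$ out to an endpoint, and then derives a contradiction from the square-face property at that endpoint — i.e., it pushes the problematic column to the boundary rather than trying to rule it out column by column.
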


\begin{proof}
Since the true/false assignments achieve this property, we only have to show that each column requires at least two trackers. Assume that a minimum tracking set has only one tracker in column $1<k<m_i$; it must be on $\mu_k$. Then all vertices on columns $k-1,k+1$ must be tracked. For the average number of trackers per column to be at most $2$, the number of trackers per column must be an alternating sequence of $\dots 3,1,3,1, \dots$, with column $1$ and/or column $m_i$ only having 1 tracker, which contradicts the square face property.
\end{proof}

\lemvargadgetmintrackingset*
\begin{proof}
Assume that some $\mu_k$ is not tracked in a minimum tracking set, for $1<k<m_i$. By \cref{lem:3_trackers_column}, we only have to show that this causes a column to have 3 trackers. If $k=m_i-1$ then column $m_i$ must have three trackers. Otherwise, $\mu_{k+1},h_{k+1},h_{k+2}$ must be tracked, since they share a square face with $\mu_k$. If $l_{k+1}$ is tracked we are done, otherwise, $l_{k+2}$ must be tracked. Additionally, $\mu_{k+2}$ must be tracked, either by the square face property, or in the case where $k=m_i-2$, because it is in $R$. Then, column $k+2$ has 3 trackers.

Now, if all the $\mu_k$ are tracked, then the only minimum tracking sets are the true and false assignments, by the observation that two adjacent vertices cannot both be untracked and \cref{lem:var_gadget_is_trackingset}.
\end{proof}

\lemclausesatisfiediff*
\begin{proof}
The entry-exit $(\overline{\beta_1}, \overline{\beta_2})$ is the only one, with respect to $F_C$, that is tracked if and only if $C$ is satisfied. The remaining entry-exit pairs are tracked by either a $\beta_k$ or a $\overline{\beta_k}$.
\end{proof}

\thmreduction*
\begin{proof}
Let $(G,\G)$ be an instance of \ptracking{} that results from applying the transformation described above to an instance $(\phi,\D)$ of \psat{}, where $\G$ and $\D$ are the underlying planar embeddings. We show that $\phi$ is satisfiable if and only if $G$ has a tracking set of size $T=\left(\sum_{i=1}^p2m_i + 4\right) + p-1$.

$(\Leftarrow)$ Choose the truth assignment of each variable according to the given tracking set. The implication follows from \cref{lem:no_other_tracking_set} and the fact that if some clause in $\phi$ was not satisfied, then its gadget face would have been untracked, a contradiction.

$(\Rightarrow)$ Place $T$ trackers on the literal vertices that correspond to the satisfiable truth assignment of every variable and on all non-literal vertices (except $s$ and $t$). We show that this corresponds to a tracking set by arguing that every cycle $C$ in $G$ is tracked. We distinguish between two cases:

\begin{enumerate}[{Case} 1:]
\item $C$ contains no clause edges.

Then $C$ is tracked by (almost) the same argument given in \cref{lem:var_gadget_is_trackingset} that shows that a truth assignment of $x_i$ corresponds to a tracking set with respect to $x_i$'s gadget. Notice that, because of \hyperref[restriction:on_faces_2trackers]{Restriction~1}, clause edges are only added to faces that require 3 trackers, so this does not change the argument for the faces which do not require 3 trackers. The only differences are: (i) the addition of the edges that force trackers in every $s_i$, which only helps the argument, and (ii) the fact that $C$ may span multiple variable gadgets, in which case $C$ must traverse at least 3 trackers on the non-literal vertices between two variable gadgets.

\item $C$ contains clause edges.

Notice that, by construction of $G$, $C$ must have at least one spine edge. If $C$ corresponds to a clause face, then it must be tracked by \cref{lem:clause_satisfied_iff}. Otherwise, we show that $C$ contains at least 3 trackers and, thus, is trivially tracked. Let us think of $C$ as alternating non-empty paths of two types: \emph{clause paths}, which only contain clause edges and \emph{spine paths}, which only contain spine edges. To avoid dealing with complex cycles, we observe that each spine path in $C$ must contain at least 1 tracker; this follows from the fact at least one of the 2 vertices sharing a spine edge must have a tracker (see \hyperref[par:var_gadget]{variable gadget}). Thus, let us assume that $C$ contains no more than 2 spine/clause paths, or otherwise $C$ immediately contains 3 trackers. Notice that if one of the spine paths spans 2 or more variable gadgets, then it must traverse at least 3 trackers on the non-literal vertices between two variable gadgets. Since every clause in $\phi$ contains exactly 3 distinct literals and $C$ is simple, the only cases where none of the spine paths span multiple variable gadgets are the following:

\begin{enumerate}[(i)]
  \item $C$ contains exactly 2 clause paths in different sides of the spine.

  Then, the 2 spine paths connecting the two sides of the spine must traverse 2 trackers each (see \hyperref[par:var_gadget]{variable gadget}). Therefore, $C$ contains at least 4 trackers.

  \item $C$ contains exactly 2 clause paths on the same side of the spine, which both start or both end at the same variable gadget, one nested in the other.

  Then, by \hyperref[restriction:reserved_vertices]{Restriction~2} one of the spine paths contains at least 6 vertices, half of which must be tracked.
\end{enumerate}
\end{enumerate}
\end{proof}

\begin{figure}
\begin{subfigure}[t]{\textwidth}
\centering
\includegraphics[width=120pt]{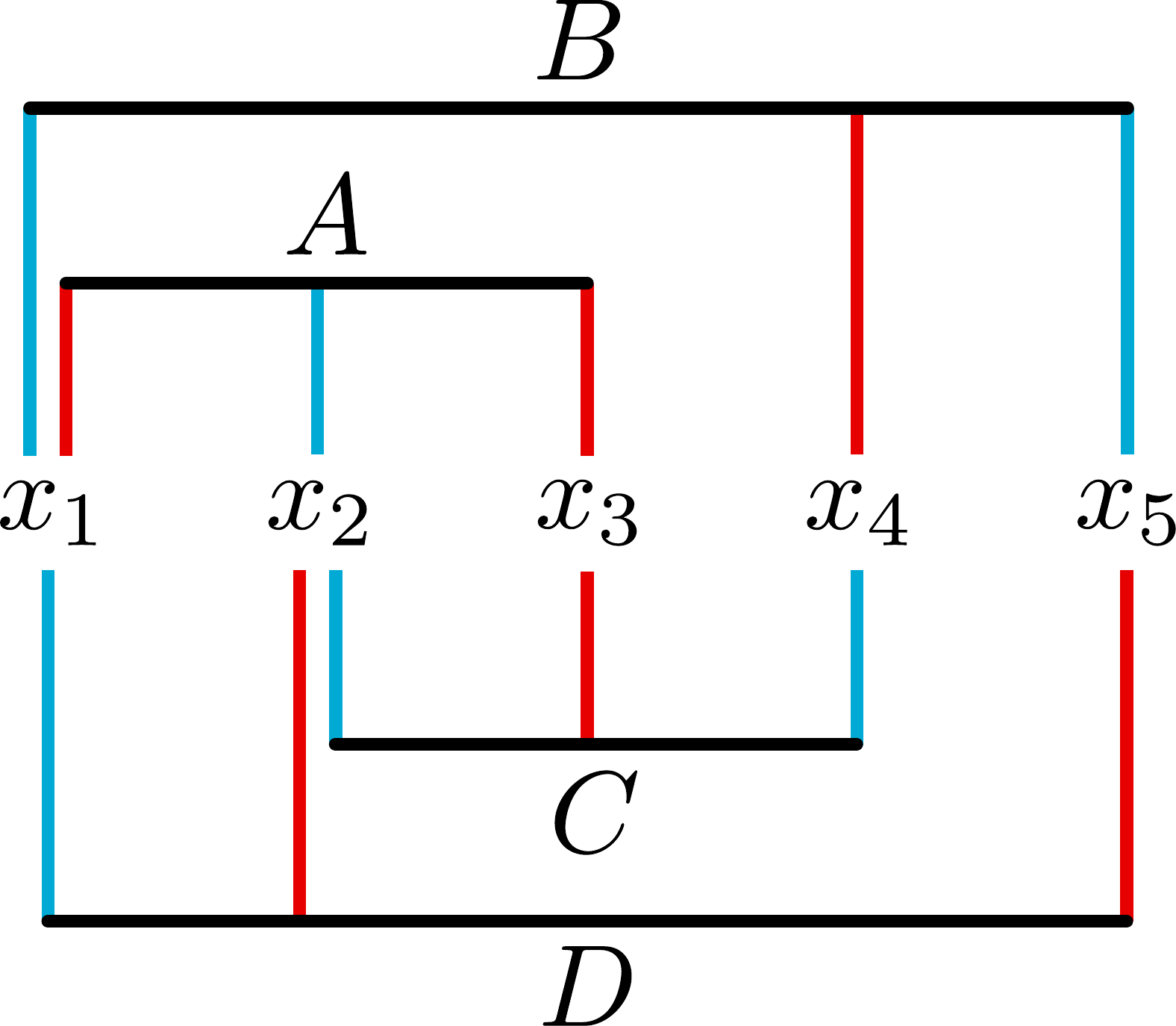}
\caption{Example of a \psat{} instance drawn in a rectilinear fashion with clauses
${A = \left (\overline{x_1} \vee x_2 \vee \overline{x_3} \right)}$,
${B = \left (x_1 \vee \overline {x_4} \vee x_5 \right)}$,
${C = \left (x_2 \vee \overline{x_3} \vee x_4 \right)}$,
${D = \left (x_1 \vee \overline{x_2} \vee \overline {x_5} \right)}$.}
\label{subfig:rect_sat}
\end{subfigure}

\vspace{2em}

\begin{subfigure}[t]{\textwidth}
\centering
\includegraphics[width=\textwidth]{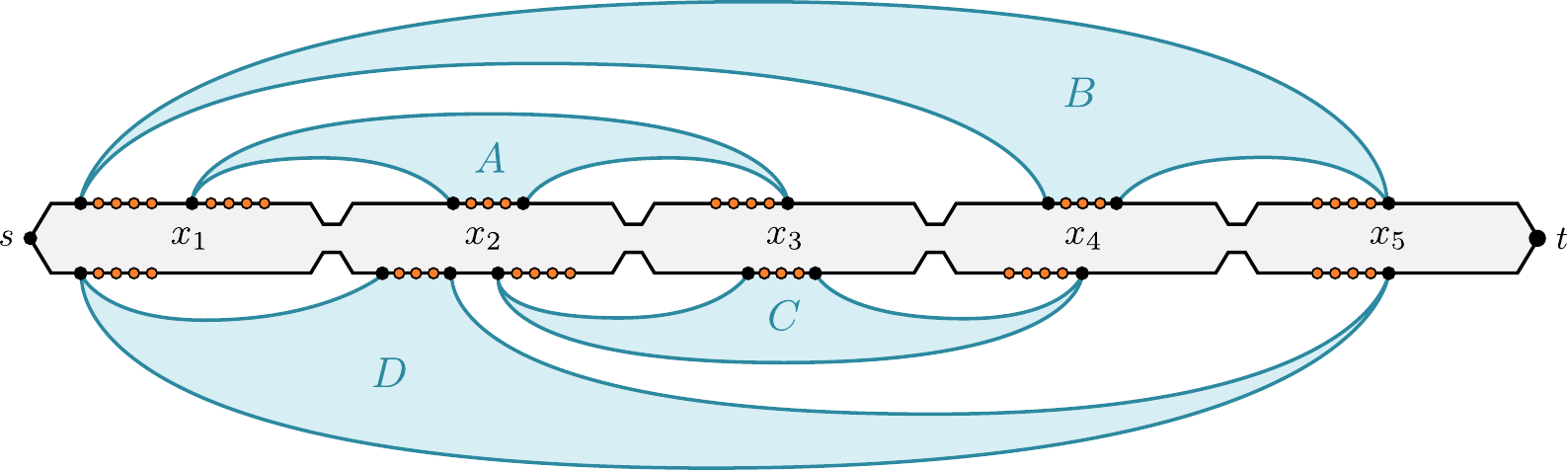}
\caption{Example of a \ptracking{} instance. The orange vertices ensure that no cycle is untracked (see \cref{fig:clause_gadget}).}
\label{subfig:reduction_overview}
\end{subfigure}
\caption{Illustration of a \psat{} instance (above) and the corresponding \ptracking{} instance associated with the reduction described in \cref{sec:hardness} (below).}\label{fig:reduction_overview}
\end{figure}

\end{document}